\newtheorem{remark}{Remark}
\newtheorem{proposition}{Proposition}
\begin{document}
\definecolor{lime}{HTML}{A6CE39}
\DeclareRobustCommand{\orcidicon}{%
    \begin{tikzpicture}
    \draw[lime, fill=lime] (0,0)
    circle [radius=0.16]
    node[white] {{\fontfamily{qag}\selectfont \tiny ID}};    \draw[white, fill=white] (-0.0625,0.095)
    circle [radius=0.007];    \end{tikzpicture}
    \hspace{-2mm}}
\foreach \x in {A, ..., Z}{%
    \expandafter\xdef\csname orcid\x\endcsname{\noexpand\href{https://orcid.org/\csname orcidauthor\x\endcsname}{\noexpand\orcidicon}}
    }
\hypersetup{hidelinks}
\newcommand{\orcidauthorA}{0000-0001-6215-6703}

\title{Anchor-points Assisted Uplink Sensing\\ in Perceptive Mobile Networks}
\author{Yanmo Hu,~\IEEEmembership{Student~Member,~IEEE}, J. Andrew Zhang,~\IEEEmembership{Senior~Member,~IEEE},\\Weibo Deng, Y. Jay Guo,~\IEEEmembership{Fellow,~IEEE}\vspace{-0.25em}
\thanks{Yanmo Hu and Weibo Deng are with the Key Laboratory of Marine Environmental Monitoring and Information Processing, Ministry of Industry and Information Technology, and also with the School of Electronic and Information Engineering, Harbin Institute of Technology. Email: 19B905008@stu.hit.edu.cn; dengweibo@hit.edu.cn. \emph{(Corresponding author: Weibo Deng.)}

J. A. Zhang and Y. J. Guo are with Global Big Data Technologies Centre, the University of Technology Sydney. Email: \{Andrew.Zhang; Jay.Guo\}@uts.edu.au.}}

\maketitle

\begin{abstract}
Uplink sensing in integrated sensing and communications (ISAC)
systems, such as Perceptive Mobile Networks, is challenging due to the clock asynchronism between transmitter and receiver.
Existing solutions typically require the presence of a dominating line-of-sight path and the knowledge of transmitter location at the receiver.
In this paper, relaxing these requirements, we propose a novel and effective uplink sensing scheme with the assistance of static anchor points. Two major algorithms are proposed in the scheme.
The first algorithm
estimates the relative timing and carrier frequency offsets due to clock asynchronism, with respect to those at a randomly selected reference snapshot.
Theoretical performance analysis is provided for the algorithm.
The estimates from the first algorithm are then used to compensate for the offsets and generate the angle-Doppler maps.
Using the maps, the second algorithm identifies the anchor points, and then locates the UE and dynamic targets.
Feasibility of UE localization is also analyzed.
Simulation results are provided and demonstrate the effectiveness of the proposed algorithms.
\end{abstract}

\begin{IEEEkeywords}
Integrated sensing and communication (ISAC), Perceptive mobile networks, uplink sensing, clock asynchronism, static anchor points, parameter estimation.
\end{IEEEkeywords}

\section{Introduction}\label{Section_I}
\subsection{Motivations and Backgrounds}\label{Section_I-A}
\IEEEPARstart{I}{SAC} \cite{9585321, 1011453411834, 9737357, 1011453351279} enables to share a majority of hardware and network infrastructure to integrate communication and sensing into one system, which currently serves as a candidate technology of 6G mobile networks.
With the application of ISAC in mobile networks,
the perceptive mobile networks (PMNs) \cite{9585321} can be achieved using either uplink \cite{9606921, 5393298} or downlink signals \cite{8376990}.
The user equipment (UE) and base station (BS) in uplink sensing can be regarded as the transmitter and receiver in a \emph{bistatic sensing} \cite{9585321}, respectively.
The major challenge in such a bistatic uplink sensing is that the
UE and BS typically use their local oscillators, leading to clock asynchronism.
Clock asynchronism may generate time-varying timing offset (TMO) and carrier frequency offset (CFO), causing ambiguity in delay and Doppler estimation and preventing from coherent processing of multiple discontinuous signals\cite{9585321, 9848428, 8443427, 9617146}.

Several techniques have been introduced to resolve the impact of clock asynchronism on uplink sensing \cite{9848428, 2018Widar, 9349171}.
Among them, the cross-antenna cross-correlation (CACC) technique is initially introduced for passive WiFi sensing \cite{2018Widar}.
Exploiting the fact that all antennas in an
array share the same clock oscillator,
CACC calculates the conjugate multiplication between signals from different receiving antennas,
successfully removing TMO and CFO.
However, the presence of multiple cross-product terms in CACC doubles the parameters to be estimated.
To address this problem,
a mirrored-MUSIC algorithm is proposed in \cite{9349171} to handle the basis vectors with mirror symmetry character.
It was shown that the mirrored-MUSIC algorithm can accurately detect multiple targets under the ISAC framework with relatively low computation complexity.
Another class of techniques,
similarly exploiting
the characteristic that the clock is identical across all the antennas in the BS, is the channel state information (CSI) ratio method \cite{Added_123777, 9645160}.
However, two fundamental assumptions used in these methods restrict their applications:
the position of UE is known,
and there exists a dominating line-of-sight (LOS) path between UE and BS.

A limited number of techniques have been reported to relax these requirements.
Without requiring the knowledge of the UE's position, recently,
\cite{10262009} locates dynamic targets by assuming that the TMO and CFO are noise-like and follow zero-mean Gaussian distribution.
By employing the Kalman filter (KF), \cite{10262009}
proposes a KF-based CSI enhancer to suppress the noise-like TMO.
Since the LOS from UE to BS is the shortest,
\cite{10262009} treats the smallest range value
as the LOS path range and then locates the UE.
However,
the performance of the scheme depends on the actual distribution of the TMO and CFO.
Regarding the requirement for LOS path,
\cite{8515231} analytically shows that
non-LOS (NLOS) paths alone can also provide sufficient
information for locating UE if and only if the estimates for angle-of-arrival (AOA), angle-of-departure (AOD), and range are sufficiently accurate.
Specifically, three or more NLOS paths are `possible'
for generating
unambiguous
estimation for
UE's
position and orientation.
Nevertheless,
\cite{8839839} proves that the localizability of using NLOS paths is
low under under the Boolean
model in 5G mm-wave systems.

Another potential method is to use anchor points, strong reflectors with known positions to the sensing receiver.
For wireless localization,
which involves the localization of signal-emitting transmitters,
it has been demonstrated that anchor points can efficiently reduce the system cost and provide a solution for target localization \cite{1433260, 8839839, 9301354},
particularly in scenarios where a line-of-sight (LOS) path is absent.
They can be classified into two groups: mobile and static anchor points.
The mobile anchor points \cite{1433260} are equipped with GPS, moving among the sensing area and periodically broadcasting  their current positions to help the localization.
The static anchor points \cite{9301354} generally have known positions and are usually strong scatterers, thus are easy to be distinguished in the echo signals.
Although anchor points have been demonstrated to be effective in assisting wireless localization,
work has yet to be reported for sensing in ISAC systems.
For uplink sensing, significant challenges associated with clock asynchronism need to be overcome before signals and parameters for anchor points can be exploited.

\subsection{Contributions and Organizations}\label{Section_I-B}

In this paper, we develop an anchor-points assisted uplink sensing scheme for ISAC systems.
The scheme will be presented by referring to a PMN involving a BS and UE, and it can also be applied to other networks.
We assume the existence of several static anchor points near the BS.
Our scheme comprises two core algorithms.
Algorithm \ref{Algorithm1} is designed to estimate relative TMO (RTMO) and relative CFO (RCFO) with respect to the values at a randomly selected snapshot.
It includes two sub-algorithms, coarse estimation and refined estimation.
Using the estimated RTMO and RCFO, Algorithm \ref{Algorithm1} compensates the TMO and CFO across multiple snapshots.
Algorithm \ref{Algorithm2} obtains sensing parameters and achieves localization for UE and dynamic targets with the assistance of anchor points.
Notably, our scheme does not require to know the UE's location or the existence of a line-of-sight (LOS) path between the BS and UE,
thus addressing the limitations of existing technologies.
It is very promising for practical uplink sensing in PMN, and general ISAC systems.

Our main contributions are summarized below:
\begin{itemize}
  \item
  We propose a coarse estimation algorithm for RTMO and RCFO in Algorithm \ref{Algorithm1},
  which works efficiently when the power of static paths is larger than that of dynamic ones.
  We transform the clock asynchronism problem into a single-tone parameter estimation problem solved by maximum likelihood estimation (MLE).
  This algorithm works with both continuous or discontinuous CSI measurements either in the time or frequency domain.

  \item
  We propose an iterative algorithm in Algorithm \ref{Algorithm1} to obtain refined estimates of RTMO and RCFO, using the coarse results as the initial value.
  The main purpose is to recover the Doppler of static objects, i.e., $f_D = 0$.
  We also derive the bias and theoretical RMSE of the estimation errors and prove that they have negligible effects on the subsequent parameter estimation for dynamic targets and anchor points.
  The theoretical results also demonstrate that Algorithm \ref{Algorithm1} is robust under any values of RTMO and RCFO.

  \item
  We propose Algorithm \ref{Algorithm2} to obtain sensing parameter estimations, pinpoint anchor points, and locate both UE and dynamic targets.
  We analytically show that this algorithm works when the number of anchor points is not less than 2.
  We also analyze the feasibility of localization in relation to the locations of anchor points and UE.

\end{itemize}

The rest of this paper is organized as follows.
Section \ref{II-A} introduces the uplink sensing model without the LOS path.
The RTMO and RCFO estimation algorithm is presented in Section \ref{III},
including the coarse and refined estimation algorithms in Section \ref{III-A} and \ref{III-B}, respectively.
The performance analysis is given in Section \ref{Performance}.
Section \ref{IV} presents the localization algorithm for UE and dynamic targets.
Simulation results are shown in Section \ref{V}, and conclusions are drawn in Section \ref{VI_conclusion}.

$Notation$:
$\odot $ denotes the Hadamard product;
$\circledast $ denotes the convolution;
${{\mathbf{A}}^{T}}$, ${{\mathbf{A}}^{*}}$ and ${{\mathbf{A}}^{H}}$ represent the transpose, conjugate and conjugate transpose of $\mathbf{A}$, respectively;
${{\left[ \mathbf{a}  \right]}_{n}}$ is the $n$-th element of vector $\mathbf{a}$, and ${{\left[ \mathbf{A}  \right]}_{m,n}}$ is the $\left(m,n \right)$-th element of matrix $\mathbf{A}$;
$\angle \left\{ a \right\}$ denotes the argument of the complex number;
${{\mathcal{F}}_{2}}$ is the two-dimensional discrete fourier transform;
${{\mathbf{I}}_{M}}\in {{\mathbb{R}}^{M\times M}}$ is identity matrix;
${{\mathbf{1}}_{M\times 1}}\in {{\mathbb{R}}^{M\times 1}}$ is the vector with all the elements are ones,
while ${{\mathbf{0}}_{M\times 1}}\in {{\mathbb{R}}^{M\times 1}}$ is the vector filled by zeros.
In particular, ${{\mathbf{1}}_{K}}\in {{\mathbb{R}}^{K\times K}}$ is the square matrix;
${{\mathbf{A}}^{\dagger }}={{\left( {{\mathbf{A}}^{T}}\mathbf{A} \right)}^{-1}}{{\mathbf{A}}^{T}}$ is the pseudo-inverse of $\mathbf{A}$;
${\mathbf{e}}_{M, m}\in {{\mathbb{R}}^{M\times 1}}$ is a vector with the $m$-th element equal to 1 and others equal to 0, where $m=0,\cdots,M-1$.

\section{Signal Model of Uplink Sensing}\label{II-A}

\begin{figure}[!t]
\centering
\includegraphics[width=3.2in]{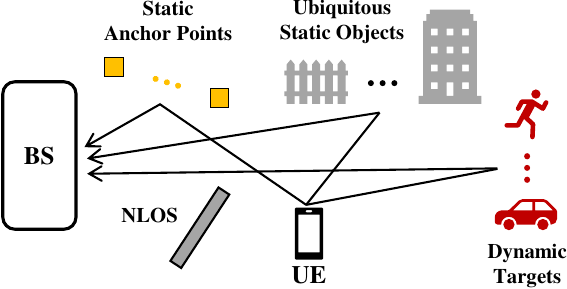}
\caption{The uplink sensing model without the LOS path.}
\label{Fig1}
\end{figure}

The uplink communication and sensing model is illustrated in Fig. \ref{Fig1}.
We consider an ISAC system with one static UE and one static BS.
The UE has $M_t$ omnidirectional transmitter antennas, while the BS is equipped with a uniform linear array with $M_r$ receiver antennas, and the antenna interval is half-wavelength.
The uplink signal is received by the BS after being reflected by two types of objects: dynamic and static.
We consider the typical case when most of reflected signals are from static objects, having zero Doppler.
The number of dynamic objects is relatively low, but they are the targets of interest for sensing, such as human targets, cars, etc.

Similar to existing works on bistatic sensing \cite{2018Widar, 9349171}, we consider the existence of clock offset between the UE transmitter and BS receiver, which causes sensing ambiguity and prevents from coherent processing of discontinuous channel measurements \cite{{9848428}}.
However, unlike in most of these works, we consider
\emph{a more challenging scenario, where there is no LOS path between the UE and BS, and UE's location is unknown to the BS}.
For a single BS where multiple antennas are closely located,
multi-station-based techniques such as TDOA cannot be applied to solve the clock asynchronism problem.
Hence, we emphasize the necessity of leveraging additional auxiliary information to resolve the clock offset in this scenario.
Failing to do so could result in ambiguous range and Doppler shift estimates for the dynamic targets.
The following two fundamental assumptions made in this paper can enable sensing in the absence of the LOS path between the UE and BS and without knowing the location of the UE:
\begin{itemize}
  \item
  Assumption One (A1): The total power of the electromagnetic wave reflected from the static objects is stronger than that of the dynamic targets;
  \item
  Assumption Two (A2): Assuming that the precise position of $L_a \ge 1$ static anchor points \emph{close to} the BS is known, and signals reflected from them are strong and can be separated from other multipath signals by the measured AoA and power.
\end{itemize}

For A1, in many applications,
there are often a large number of static objects/reflectors in the sensing environment, such as buildings and plants outdoors and walls and furnitures indoors.
They can lead to stronger reflected signals compared to dynamic targets, particularly when these dynamic targets are not very close to the transmitter or receiver.
A1 provides a zero Doppler reference against the clock asynchronism influence,
and is mainly used for estimating the RCFO and RTMO, as will be detailed in Section III.

For A2, the anchor points can be from either existing static objects or specifically installed strong reflectors, such as Corner Reflectors, positioned near the receiver.
The positions for anchor points can be obtained with ease, via installation record or long-period averaging.
Note that we only consider passive reflective anchor points in this paper,
although it is possible to extend our scheme to more advanced anchor points such as reconfigurable intelligent surface (RIS) with beam steering capabilities.
A2 is mainly used to remove a timing ambiguity to generate the estimates of the absolute locations of UE and dynamic targets, as will be discussed in Section \ref{IV}.

In what follows, we establish the signal model of uplink sensing based on the above assumptions.
Our model is an abstract of 5G mobile signals.
The demodulation reference signals (DMRS) for the
physical uplink shared channel in 5G NR \cite{5GNR211, 8928165, 9779639} have fixed and known signal values to the BS,
which can be effectively used for uplink sensing.
Suppose that
the subcarrier spacing of the OFDM signal is $\Delta f $
and the number of subcarriers is $N$.
Thus, the bandwidth is $B = N \Delta f$.
Assume that the DMRS occupies the same subcarriers across OFDM symbols
 and it is periodic
 in the time domain,
and the interval between two adjacent OFDM symbols is $T_0$.
We use $K$ to represent the number of used DMRS symbols
for sensing.
Since the clocks for BS and UE are not locked,
transmissions can cause time-varying phase shift, TMO and CFO to received DMRS symbols \cite{9585321}.
We use $e^{j\beta_k}$, $ {{\tau }_{o,k}}$ and ${{f}_{o,k}} $ ($k = 0, \cdots, K - 1$) to represent the phase shift, TMO and CFO, respectively.
Note that here the CFO is the small residual CFO after CFO estimation and compensation, which are typically applied in the time domain before signal processing in the frequency domain.
Thus, the phase shift caused by the CFO can typically be approximated as a constant within one OFDM symbol.

Assume that $L$ targets distribute in the space, with the propagation delay $\tau_{\ell}$, Doppler $f_{D, \ell}$, AOD $\varphi_{\ell}$ and AOA $\theta_{\ell}$, where $\ell=0,\cdots,L-1$ is target index.
We assume that the number of dynamic targets and static objects is $L_d$ and $L_s$, respectively, and we have $L = L_d + L_s$.
For uplink sensing, the CSI for the $n$-th subcarrier ($n = 0, \cdots, N - 1$) of the $k$-th OFDM symbol between the $m_t$-th antenna of UE and the $m_r$-th antenna of BS is given by \cite{9585321, 9848428, 10262009}:
\begin{align}\label{formula_original}
\nonumber y_{n,k}^{\left( {{m}_{t}},{{m}_{r}} \right)} = {}&{{e}^{j{{\beta }_{k}}}}\sum\limits_{\ell =0}^{L-1}{{{\xi }_{\ell }}{{e}^{-j2\pi n\left( {{\tau }_{\ell }}+{{\tau }_{o,k}} \right)\Delta f}}}{{e}^{j2\pi k\left( {{f}_{D,\ell }}+{{f}_{o,k}} \right){{T}_{0}}}}  \\
&\times {{e}^{j\frac{2\pi }{\lambda }{{m}_{r}}d\sin {{\theta }_{\ell }}}}{{e}^{j\frac{2\pi }{\lambda }{{m}_{t}}d\sin {{\varphi }_{\ell }}}}+z_{n,k}^{\left( {{m}_{t}},{{m}_{r}} \right)},
\end{align}
where $\lambda$ denotes the wavelength, ${{\xi }_{\ell }}$ denotes the value of the complex-valued reflection coefficient and the propagation attenuation, and
${{z}_{n,k}^{\left(m\right)}} \in {\mathbb{C}}$ is the additive white Gaussian noise matrix with ${{z}_{n,k}^{\left(m\right)}}\sim \mathcal{C}\mathcal{N}\left( 0,\sigma _{n}^{2} \right)$.
The power attenuation of the NLOS path is ${\frac{{{\lambda }^{2}}{{\sigma }_{x}}}{{{\left( 4\pi  \right)}^{3}}R_{T}^{2}R_{R}^{2}}}$, where $\sigma_x$, $R_T$ and $R_R$ denote the reflecting factor, the transmitter-to-target range and the receiver-to-target range, respectively.
Suppose that the noise is statistically independent at different subcarriers, OFDM symbols and receiver antennas.

In (\ref{formula_original}), ${{e}^{j{{\beta}_{k}}}}$ is the random phase in the $k$-th OFDM symbol.
This term has little influence on the communication, as it can be absorbed by channel estimation.
While in sensing applications,  ${{e}^{j{{\beta}_{k}}}}$ in different OFDM symbols must be equal or known,
otherwise the ability of using complex signals to estimate the Doppler will be lost.
However, ${{e}^{j{{\beta}_{k}}}}$ changes randomly across different OFDM symbols and is unknown both in BS and UE.
Note that the ${{e}^{j{{\beta}_{k}}}}$ in (\ref{formula_original}) is only dependent of the OFDM symbol index $k$, similar to the CFO term.
In this paper, we combine the random phase and CFO and eliminate their impact simultaneously.

Now, we rewrite (\ref{formula_original}) as:
\begin{align}\label{1}
 y_{n,k}^{\left( m \right)}= {{e}^{j{{{\tilde{f}}}_{o,k}}}}\sum\limits_{\ell =0}^{L-1}{{{\xi }_{\ell }}{{e}^{-j2\pi n\left( {{\tau }_{\ell }}+{{\tau }_{o,k}} \right)\Delta f}}{{e}^{j2\pi k{{f}_{D,\ell }}{{T}_{0}}}}{a_{\ell}}{\left( m \right)}}+{{z}_{n,k}^{\left(m\right)}},
\end{align}
where we simplify the angle term for ease of derivation: ${{a}_{\ell}}\left( m \right)={{e}^{j\frac{2\pi }{\lambda }{{m}_{r}}d\sin {{\theta }_{\ell }}}}{{e}^{j\frac{2\pi }{\lambda }{{m}_{t}}d\sin {{\varphi }_{\ell }}}}$ and $m=0,\cdots,{M_t}{M_r} - 1$, and ${{{\tilde{f}}}_{o,k}}\triangleq {{\beta }_{k}}+2\pi k{{f}_{o,k}}{{T}_{0}}$
combines the random phase and the CFO.

\section{Estimation of Relative CFO and Relative TMO}\label{III}
In this section, the derivation is made only under the assumption of A1.
We select reference values at a give time as the benchmarks to estimate the relative values of other CFO and TMO in this section, and then propose methods to estimate the reference values in Section \ref{IV}.
Let ${{\tau }_{o,0 }}$ and ${\tilde{f }_{o,0 }}$ denote the respective reference values of TMO and CFO.
The RTMO and RCFO values are defined by ${{\tau }_{o,k}^R} = {{\tau }_{o,k}} - {{\tau }_{o,0}}$ and ${\tilde{f }_{o,k}^R} = {\tilde{f }_{o,k}} - {\tilde{f }_{o,0}}$, $k=1,\cdots, K-1$, respectively.
Note that ${{\tau }_{o,0}^R} = {0}$ and ${\tilde{f }_{o,0}^R} = 0$.

Based on A1, our main idea is to make full use of the characteristic of static objects and solve an optimization function to obtain the RTMO and RCFO.
Note that the received signal in each received antenna is processed independently in this section.

\subsection{Estimation of RTMO and RCFO: Coarse Estimation}\label{III-A}
Now, we collect the data from each subcarrier, i.e., ${{\mathbf{y}}^{\left(m\right)}_{k}}={{\left[ {{y}^{\left(m\right)}_{0,k}},\cdots ,{{y}^{\left(m\right)}_{N-1,k}} \right]}^{T}}\in {{\mathbb{C}}^{N\times 1}}$ with $k = 0, \cdots, K-1$, and construct the following vector based on (\ref{1}):
\begin{align}\label{2}
{{\mathbf{y}}^{\left(m\right)}_{k}}=
{{e}^{j{{{\tilde{f}}}_{o,k}}}}\text{diag}\left\{ {{\mathbf{t}}_{k}} \right\}\mathbf{\Gamma }\left[\bm{\xi }\odot {{\mathbf{d}}_{k}} \odot \mathbf{a}\left( m\right) \right],
\end{align}
where
$\mathbf{\Gamma }\in {{\mathbb{C}}^{N\times L}}$, ${{\left[ \mathbf{\Gamma } \right]}_{n,\ell }}={{e}^{-j2\pi n{{\tau }_{\ell }}\Delta f}}$ denotes the matrix containing all targets' delay;
${{\mathbf{t}}_{k}}={{\left[ 1,{{e}^{-j2\pi {{\tau }_{o,k}}\Delta f}},\cdots ,{{e}^{-j2\pi \left( N-1 \right){{\tau }_{o,k}}\Delta f}} \right]}^{T}}\in {{\mathbb{C}}^{N\times 1}}$ represents the TMO vector;
${{\bm{\xi }}}={{\left[ {{\xi }_{0}},\cdots ,{{\xi }_{L-1}} \right]}^{T}}\in {{\mathbb{C}}^{L\times 1}}$ denotes the vector of complex reflection coefficient;
${{\mathbf{d}}_{k}}={{\left[ {{e}^{j2\pi k{{f}_{D,0}}{{T}_{0}}}},\cdots ,{{e}^{j2\pi k{{f}_{D,L-1}}{{T}_{0}}}} \right]}^{T}}\in {{\mathbb{C}}^{L\times 1}}$ denotes the Doppler phase in the $k$-th OFDM block;
and
$\mathbf{a}\left( m \right)={{\left[ {{a}_{0}}\left( m \right),\cdots ,{{a}_{L-1}}\left( m \right) \right]}^{T}}\in {{\mathbb{C}}^{L\times 1}}$.

Different from the previous methods
which employ the conjugate product operation between different antennas, such as the CACC \cite{2018Widar} and the CSI-ratio
method \cite{Added_123777}, the coarse estimation calculates the conjugate product between different OFDM symbols.
Construct the following formula:
\begin{align}\label{3}
 \nonumber\bm{\varpi} _{k,0}^{\left( m \right)} = {}& \text{diag}\left\{ \left(\mathbf{y}^{\left(m\right)}_{0}\right)^{*} \right\}{{\mathbf{y}}^{\left(m\right)}_{k}} \\
 \nonumber  = {}& {{e}^{j\left( {{{\tilde{f}}}_{o,k}}-{{{\tilde{f}}}_{o,0}} \right)}}\text{diag}\left\{ {{\mathbf{t}}_{k}}\odot \mathbf{t}_{0}^{*} \right\}\\
 \nonumber &\times \text{diag}\left\{ {{\mathbf{\Gamma }}^{*}}{{\left[\bm{\xi }\odot {{\mathbf{d}}_{0}} \odot \mathbf{a}\left( m \right) \right]}^{*}} \right\}\mathbf{\Gamma }\left[\bm{\xi }\odot {{\mathbf{d}}_{k}} \odot \mathbf{a}\left( m \right) \right] \\
 \nonumber   \approx {}& {{e}^{j {\tilde{f }_{o,k}^R} }}\text{diag}\left\{ {{\mathbf{t}}_{k}}\odot \mathbf{t}_{0}^{*} \right\}\\
\nonumber & \times \text{diag}\left\{ {{\mathbf{\Gamma }}^{*}}{{\left[ \bm{\xi }\odot\mathbf{a}\left( m \right) \right]}^{*}} \right\}\mathbf{\Gamma }\left[ \bm{\xi }\odot\mathbf{a}\left( m \right) \right] \\
  = {}& {{e}^{j {\tilde{f }_{o,k}^R} }}\text{diag}\left\{ \mathbf{t}_{k}^{R} \right\}{{\mathbf{h}}^{\left(m\right)}},
\end{align}
where
$\bm{\varpi} _{k,0}^{\left( m \right)} \in {{\mathbb{C}}^{N\times 1}}$, $k = 1, \cdots, K - 1$ is the result of phase difference between the $0$-th OFDM symbol and the $k$-th symbol,
$\mathbf{t}_{k}^{R}\triangleq {{\left[ 1,{{e}^{-j2\pi \tau _{o,k}^{R}\Delta f}},\cdots ,{{e}^{-j2\pi \left( N-1 \right)\tau _{o,k}^{R}\Delta f}} \right]}^{T}}\in {{\mathbb{C}}^{N\times 1}}$
contains the RTMO value,
${{\mathbf{h}}^{\left(m\right)}}\triangleq\text{diag}\left\{ {{\mathbf{\Gamma }}^{*}}{{\left[ \bm{\xi }\odot \mathbf{a}\left( m \right) \right]}^{*}} \right\}\mathbf{\Gamma }\left[ \bm{\xi } \odot \mathbf{a}\left( m \right) \right]\in {{\mathbb{R}}^{N\times 1}}$ and each element in ${{\mathbf{h}}^{\left(m\right)}}$ is a positive and real number.
Here, the approximation in the third step is based on A1.

\allowdisplaybreaks[4]

The last line of (\ref{3}) demonstrates that the initial phase and the frequency of $\bm{\varpi} _{k,0}^{\left( m \right)}$ are the $k$-th RCFO and RTMO, respectively.
Herein, we normalize $\bm{\varpi} _{k,0}^{\left( m \right)}$ in order to eliminate the impact from
the amplitude
${{\mathbf{h}}^{\left(m\right)}}$:
\begin{align}\label{4}
 {{\left[ \bm{\tilde{\varpi} }_{k,0}^{\left( m \right)} \right]}_{n}}={{{{\left[ \bm{\varpi} _{k,0}^{\left( m \right)} \right]}_{n}}}/{\left| {{\left[ \bm{\varpi} _{k,0}^{\left( m \right)} \right]}_{n}} \right|}}.
\end{align}
where $n=0,\cdots,N - 1$.
It is worth noting that (\ref{4}) is a typical single-tone parameter estimation problem \cite{1055282}, and the MLE can be employed
to estimate the unknown parameters.
However, the noise component in (\ref{3}) does not follow the Gaussian distribution.
Here, we still hypothesize that the noise
obeys Gaussian distribution
in order to facilitate the application of MLE.
Although the estimation is not sufficiently accurate,
it is acceptable since we only find the initial estimates.
The approximated MLE formulas are given by:
\begin{subequations}\label{MLE_fomular_RTMO_RCFO}
\begin{align}
  \hat{\tau }_{o,k}^{R{\left( m \right)}}&=\underset{\tau _{o,k}^{R}}{\mathop{\arg \max }}\,\left| \sum\limits_{n=0}^{N-1}{{{e}^{j2\pi n\tau _{o,k}^{R}\Delta f}}{{\left[ \bm{\tilde{\varpi }}_{k,0}^{\left( m \right)} \right]}_{n}}} \right|, \label{MLE_fomular_RTMO} \\
\hat{\tilde{f}}_{o,k}^{R\left( m \right)}&=\angle \left\{ \sum\limits_{n=0}^{N-1}{{{e}^{j2\pi n\hat{\tau }_{o,k}^{R\left( m \right)}\Delta f}}{{\left[ \bm{\tilde{\varpi }}_{k,0}^{\left( m \right)} \right]}_{n}}} \right\}. \label{MLE_fomular_RCFO}
\end{align}
\end{subequations}
Note that the fast Fourier transform (FFT) algorithm is conducive to reducing the computation complexity of (\ref{MLE_fomular_RTMO}).

After performing (\ref{MLE_fomular_RTMO_RCFO}) on the received signal at each antenna,
we finally obtain ${M_t}{M_r}$ estimates of RTMO and ${M_t}{M_r}$ estimates of RCFO.
Note that to reduce complexity, we can compute and utilize $M$ out of ${M_t}{M_r}$ estimates, where $M$ can be determined by the computational complexity.
Thus, the final coarse estimates of RTMO and RCFO are given by:
\begin{align}\label{final_coarse_RTMO_RCFO}
 \hat{\tau }_{o,k}^{R}={\frac{1}{{M}}}\sum\limits_{m=0}^{M-1}{\hat{\tau }_{o,k}^{R\left( m \right)}}, \
  \hat{\tilde{f}}_{o,k}^{R}={\frac{1}{M}}\sum\limits_{m=0}^{M-1}{\hat{\tilde{f}}_{o,k}^{R\left( m \right)}},
\end{align}
where $M \le {M_t}{M_r}$.
We set $\hat{\tau }_{o,0}^{R} = 0$ and $\hat{\tilde{f}}_{o,0}^{R} = 0$.
It is worth noting that $\hat{\tau }_{o,k}^{R}$ and $\hat{\tilde{f}}_{o,k}^{R}$ are respectively equal to ${\tau }_{o,k}^{R}$ and ${\tilde{f}}_{o,k}^{R}$ when neither the dynamic targets nor the noise exists.

By utilizing (\ref{final_coarse_RTMO_RCFO}), we can compensate the data in (\ref{1}):
\begin{align}\label{first_compensate}
 \nonumber \tilde{y}_{n,k}^{\left( m \right)}= {}&{{e}^{-j\hat{\tilde{f}}_{o,k}^{R}}}{{e}^{j2\pi n\hat{\tau }_{o,k}^{R}\Delta f}}y_{n,k}^{\left( m \right)} \\
 \nonumber  = {}&{{e}^{j\left( {{{\tilde{f}}}_{o,k}}-\hat{\tilde{f}}_{o,k}^{R}+{{{\tilde{f}}}_{o,0}}-{{{\tilde{f}}}_{o,0}} \right)}}{{e}^{-j2\pi n\left( {{\tau }_{o,k}}-\hat{\tau }_{o,k}^{R}+{{\tau }_{o,0}}-{{\tau }_{o,0}} \right)\Delta f}}\\
 \nonumber &\times \sum\limits_{\ell =0}^{L-1}{{{\xi }_{\ell }}{{e}^{-j2\pi n{{\tau }_{\ell }}\Delta f}}{{e}^{j2\pi k{{f}_{D,\ell }}{{T}_{0}}}}{a_{\ell}}{\left( m \right)}} \\
\nonumber  ={}&{{e}^{j\Delta \tilde{f}_{o,k}^{R}}}{{e}^{-j2\pi n\Delta \tau _{o,k}^{R}\Delta f}}\\
  &\times \sum\limits_{\ell =0}^{L-1}{{{{\tilde{\xi }}}_{\ell }}{{e}^{-j2\pi n\left( {{\tau }_{\ell }}+{{\tau }_{o,0}} \right)\Delta f}}{{e}^{j2\pi k{{f}_{D,\ell }}{{T}_{0}}}}{a_{\ell}}{\left( m \right)}},
\end{align}
where $k=0,\cdots,K-1$; $\Delta \tau _{o,k}^{R}=\left( {{\tau }_{o,k}}-{{\tau }_{o,0}} \right)-\hat{\tau }_{o,k}^{R}$
and
$\Delta \tilde{f}_{o,k}^{R}=\left( {{{\tilde{f}}}_{o,k}}-{{{\tilde{f}}}_{o,0}} \right)-\hat{\tilde{f}}_{o,k}^{R}$
are the errors of coarse RTMO and RCFO, respectively, and are both small;
${{\tilde{\xi }}_{\ell }}={{e}^{j{{{\tilde{f}}}_{o,0}}}}{{\xi }_{\ell }}$.

\begin{remark}\label{remark_coarse_RTMO_RCFO}
In the last line of (\ref{first_compensate}), ${{e}^{j{{{\tilde{f}}}_{o,0}}}}$ and ${{e}^{-j2\pi n{{\tau }_{o,0}}\Delta f}}$ are the residual errors after the compensation.
Since ${{{\tilde{f}}}_{o,0}}$ is a constant and is independent of $k$, $n$ and $m$,
${{e}^{j{{{\tilde{f}}}_{o,0}}}}$ has no influence on parameter (Doppler, delay and angle) estimation.
Thus, we can merge ${{e}^{j{{{\tilde{f}}}_{o,0}}}}$ with ${{\xi }_{\ell }}$ and define
${{\tilde{\xi }}_{\ell }}={{e}^{j{{{\tilde{f}}}_{o,0}}}}{{\xi }_{\ell }}$.
However, ${{e}^{-j2\pi n{{\tau }_{o,0}}\Delta f}}$ is dependent of $n$.
It will lead to the delay ambiguity \cite{9585321}, and we will present how to remove its impact in Section \ref{IV}.
In what follows, we discuss how to estimate $\Delta \tau _{o,k}^{R}$
and
$\Delta \tilde{f}_{o,k}^{R}$.
\end{remark}

\subsection{Estimation of RTMO and RCFO Errors: Refined Estimation}\label{III-B}
For ease of illustration, we reconstruct a set of new vectors along the time domain based on (\ref{first_compensate}),
i.e., ${{\tilde{\mathbf{y}}}^{\left(m\right)}_{n}}={{\left[ {\tilde{y}_{n,0}^{\left(m\right)}},\cdots ,{\tilde{y}_{n,K-1}^{\left(m\right)}} \right]}^{T}}\in {{\mathbb{C}}^{K\times 1}}$, $n=0,\cdots,N-1$:
\begin{align}\label{7}
\mathbf{\tilde{y}}_{n}^{\left( m \right)}=\text{diag}\left\{ \Delta {{\mathbf{f}}^{R}}\odot \Delta \bm{\gamma }_{n}^{R} \right\}\mathbf{D}\left[ \bm{\tilde{\xi }}\odot {{{\mathbf{\tilde{s}}}}_{n}}\odot \mathbf{a}\left( m \right) \right],
\end{align}
where
${{\mathbf{D}}}\in {{\mathbb{C}}^{K\times L}}$, ${{\left[ {{\mathbf{D}}} \right]}_{k,\ell }}={{e}^{j2\pi k{{f}_{D,\ell }}{{T}_{0}}}}$  represents the Doppler matrix composed of the Doppler phases of all targets,
${\Delta{\mathbf{\tilde{f}}}^R}={{\left[ 1,{{e}^{j{\Delta \tilde{f}_{o,1}^{R}}}},\cdots ,{{e}^{j{\Delta \tilde{f}_{o,K-1}^{R}}}} \right]}^{T}}\in {{\mathbb{C}}^{K\times 1}}$
and
$ \Delta \bm{\gamma }_{n}^{R}=\left[ 1,{{e}^{-j2\pi n\Delta \tau _{o,1}^{R}\Delta f}}, \right. $
$ {{\left. \cdots ,{{e}^{-j2\pi n\Delta \tau _{o,K-1}^{R}\Delta f}} \right]}^{T}}\in {{\mathbb{C}}^{K\times 1}} $
denote the vector of the estimation errors of coarse RCFO and RTMO, respectively,
${{\mathbf{s}}_{n}}={{\left[ {{e}^{-j2\pi n\left( {{\tau }_{0}}+{{\tau }_{o,0}} \right)\Delta f}},\cdots ,{{e}^{-j2\pi n\left( {{\tau }_{L-1}}+{{\tau }_{o,0}} \right)\Delta f}} \right]}^{T}}\in {{\mathbb{C}}^{L\times 1}}$.
The updated complex reflection coefficient vector is defined by $\bm{\tilde{\xi }}={{e}^{j{{{\tilde{f}}}_{o,0}}}}\bm{\xi }$.

As aforementioned, the targets are composed of dynamic targets and static objects, with numbers $L_d$ and $L_s$, respectively.
Equation (\ref{7}) can thus be divided into two parts:
\begin{align}\label{two_parts}
\nonumber   &\mathbf{\tilde{y}}_{n}^{\left( m \right)}=\text{diag}\left\{ \Delta {{\mathbf{f}}^{R}}\odot \Delta \bm{\gamma }_{n}^{R} \right\}\left[ {{\mathbf{D}}_{d}},{{\mathbf{D}}_{s}} \right] \left[ \begin{matrix}
   {{{\bm{\tilde{\xi }}}}_{d}}\odot  {{{\mathbf{\tilde{s}}}}_{d,n}} \odot \mathbf{a}_d\left( m \right)\\
   {{{\bm{\tilde{\xi }}}}_{s}} \odot {{{\mathbf{\tilde{s}}}}_{s,n}} \odot \mathbf{a}_s\left( m \right)\\
\end{matrix} \right]  \\
\nonumber  &=\text{diag}\left\{\Delta {{\mathbf{f}}^{R}}\odot \Delta \bm{\gamma }_{n}^{R}  \right\}{{\mathbf{D}}_{d}}\left[ {{{\bm{\tilde{\xi }}}}_{d}}\odot {{{\mathbf{\tilde{s}}}}_{d,n}}\odot \mathbf{a}_d\left( m \right) \right]\\
  &+\text{diag}\left\{ \Delta {{\mathbf{f}}^{R}}\odot \Delta \bm{\gamma }_{n}^{R} \right\}{{\mathbf{1}}_{K\times L_s}}\left[ {{{\bm{\tilde{\xi }}}}_{s}}\odot {{{\mathbf{\tilde{s}}}}_{s,n}}\odot \mathbf{a}_s\left( m \right) \right],
\end{align}
where the matrix or vector with subscripts `$d$' and `$s$' belong to the dynamic targets and static objects, respectively.
Note that the Doppler frequency of static objects is zero, i.e., ${{\mathbf{D}}_{s}}={{\mathbf{1}}_{K\times L_s}}$.

Although the phase of $\Delta {{\mathbf{f}}^{R}}\odot \Delta \bm{\gamma }_{n}^{R}$ is small, its impact cannot be overlooked.
In (\ref{two_parts}), the static Doppler matrix is contaminated by $\Delta {{\mathbf{f}}^{R}}$ and $\Delta \bm{\gamma }_{n}^{R}$, and thus, the static object energy will diffuse into the non-zero Doppler areas, and the energy of zero Doppler frequency will decrease.
Our main idea is to utilize the Doppler characteristic of static objects and guarantee that the zero Doppler frequency possesses the maximum energy based on A1.
The static object power can be expressed as:
\begin{align}\label{approximation_formula_11}
\nonumber & {{\mathcal{P}}_{s}}=\frac{1}{NK}\sum\limits_{n=0}^{N-1}{\sum\limits_{k=0}^{K-1}{{{\left| h_{s,n}^{\left( m \right)} \right|}^{2}}}} \\
\nonumber & =\frac{1}{N}\sum\limits_{n=0}^{N-1}{{{\left| \frac{1}{K}{{\left( \Delta {{\mathbf{f}}^{R}}\odot \Delta \bm{\gamma }_{n}^{R} \right)}^{H}}\text{diag}\left\{ \Delta {{\mathbf{f}}^{R}}\odot \Delta \bm{\gamma }_{n}^{R} \right\}{{\mathbf{1}}_{K\times 1}}h_{s,n}^{\left( m \right)} \right|}^{2}}} \\
 & \approx \frac{1}{N}\sum\limits_{n=0}^{N-1}{{{\left| \frac{1}{K}{{\left( \Delta {{\mathbf{f}}^{R}}\odot \Delta \bm{\gamma }_{n}^{R} \right)}^{H}}\mathbf{\tilde{y}}_{n}^{\left( m \right)} \right|}^{2}}},
\end{align}
where $h_{s,n}^{\left( m \right)} \triangleq {{\mathbf{1}}_{1\times L_s}}\left[ {{{\bm{\tilde{\xi }}}}_{s}}\odot {{{\mathbf{\tilde{s}}}}_{s,n}}\odot \mathbf{a}_s\left( m \right) \right] \in \mathbb{C}^{1}$ denotes the static value at the $n$-th subcarrier of the $m$-th antenna.
We now construct the cost function
$\mathcal{L}\left( \mathbf{c},\bm{\kappa } \right)=\sum\limits_{n=0}^{N-1}{{{\left| \left( {{e}^{j{{\mathbf{c}}^{T}}}}\odot {{e}^{-jn{{\bm{\kappa }}^{T}}}} \right)\mathbf{\tilde{y}}_{n}^{\left( m \right)} \right|}^{2}}}$,
where ${{{\mathbf{c}}},{{\bm{\kappa }}}\in {{\mathbb{R}}^{K\times 1}}}$,
and it can be easily verified that
$NK^2{{P}_{r}}\approx \underset{\mathbf{c},\bm{\kappa }}{\mathop{\max }}\,\mathcal{L}\left( \mathbf{c},\bm{\kappa } \right) $
when
$\exp \left( j\mathbf{c} \right)=\Delta {{\mathbf{f}}^{R}}$
and
$\exp \left( -jn\bm{\kappa } \right)=\Delta \bm{\gamma }_{n}^{R}$.

We have now constructed the relationship between $\mathcal{L}\left( \mathbf{c},\bm{\kappa } \right)$
and static path power ${{\mathcal{P}}_{s}}$
based on A1,
and $\Delta {{\mathbf{f}}^{R}}$ and $ \Delta \bm{\gamma }_{n}^{R}$ can be estimated by
maximizing $\mathcal{L}\left( \mathbf{c},\bm{\kappa } \right)$.
As aforementioned, the received
signal at each received antenna is processed independently.
We can formulate the following optimization problem to obtain $\Delta {{\mathbf{f}}^{R}}$ and $ \Delta \bm{\gamma }_{n}^{R}$, which is energy-based:
\begin{align}\label{9}
\underset{{{\mathbf{c}}^{\left( m \right)}},{{\bm{\kappa }}^{\left( m \right)}}\in {{\mathbb{R}}^{K\times 1}}}{\mathop{\max }} \,\mathcal{L}\left( {{\mathbf{c}}^{\left( m \right)}},{{\bm{\kappa }}^{\left( m \right)}} \right) \ \,
s.t. \ c_{0}^{\left( m \right)}=\kappa _{0}^{\left( m \right)}=0.
\end{align}

Note that the cost function in (\ref{9}) is nonlinear for both $\mathbf{{c}}^{\left( m \right)}$ and $\bm{\kappa}^{\left( m \right)}$,
so that the closed-form solution does not exist.
The Newton-Raphson method \cite{ArraySignalProcessing} can be utilized to maximize the cost function and calculate unknown vectors:
\begin{align}\label{iteration_algorithm}
\nonumber  & \mathbf{{c}}_{1}^{\left( m,q+1 \right)}=\mathbf{{c}}_{1}^{\left( m,q \right)}- \\
\nonumber &{{\left[ \frac{\partial \mathcal{L}\left( \mathbf{{c}}_{1}^{\left( m,q \right)},\bm{\kappa}_{1}^{\left( m,q \right)} \right)}{\partial \mathbf{{c}}_{1}^{\left( m \right)}\partial {{\left( \mathbf{{c}}_{1}^{\left( m \right)} \right)}^{T}}} \right]}^{-1}}\frac{\partial \mathcal{L}\left( \mathbf{{c}}_{1}^{\left( m,q \right)},\bm{\kappa}_{1}^{\left( m,q \right)} \right)}{\partial \mathbf{{c}}_{1}^{\left( m \right)}} \\
\nonumber & \bm{\kappa}_{1}^{\left( m,q+1 \right)}=\bm{\kappa}_{1}^{\left( m,q \right)}- \\
  &{{\left[ \frac{\partial \mathcal{L}\left( \mathbf{{c}}_{1}^{\left( m,q+1 \right)},\bm{\kappa}_{1}^{\left( m,q \right)} \right)}{\partial \bm{\kappa}_{1}^{\left( m \right)}\partial {{\left( \bm{\kappa}_{1}^{\left( m \right)} \right)}^{T}}} \right]}^{-1}}\frac{\partial \mathcal{L}\left( \mathbf{{c}}_{1}^{\left( m,q+1 \right)},\bm{\kappa}_{1}^{\left( m,q \right)} \right)}{\partial \bm{\kappa}_{1}^{\left( m \right)}},
\end{align}
where $q$ denotes the iteration index,
${\mathbf{{c}}_{1}^{\left( m \right)}},{\bm{\kappa}_{1}^{\left( m \right)}}\in {{\mathbb{R}}^{\left( K-1 \right)\times 1}}$ are defined in
$\mathbf{{c}}^{\left( m \right)}={{\left[ 1,\left(\mathbf{{c}}_{1}^{\left( m \right)}\right)^{T} \right]}^{T}}$ and
$\bm{\kappa}^{\left( m \right)}={{\left[ 1,\left(\bm{\kappa}_{1}^{\left( m \right)}\right)^{T} \right]}^{T}}$, respectively.
In particular, we set the initial values as
$\mathbf{{c}}_{1}^{\left( m,0 \right)}={{\mathbf{0}}_{\left( K-1 \right)\times 1}}$
and
$\bm{\kappa}_{1}^{\left( m,0 \right)}={{\mathbf{0}}_{\left( K-1 \right)\times 1}}$,
since the RTMO and RCFO errors are small after the compensation in (\ref{first_compensate}).
After convergence, we obtain the estimates $\mathbf{\hat{c}}^{\left( m \right)}={{\left[ 1,{{{\hat{c}}}_{1}^{\left( m \right)}},\cdots ,{{{\hat{c}}}_{K-1}^{\left( m \right)}} \right]}^{T}} \in \mathbb{C}^{K \times 1}$
and
$\bm{\hat{\kappa }}^{\left( m \right)}={{\left[ 1,{{{\hat{\kappa }}}_{1}^{\left( m \right)}},\cdots ,{{{\hat{\kappa }}}_{K-1}^{\left( m \right)}} \right]}^{T}} \in \mathbb{C}^{K \times 1}$.
The terms $\Delta \tilde{f}_{o,k}^{R{\left( m \right)}}$
and
$\Delta \tau _{o,k}^{R{\left( m \right)}}$ are related to
${{{\hat{c}}}_{k}^{\left( m \right)}}$ and ${{{\hat{\kappa }}}_{k}^{\left( m \right)}}$
via:
$\Delta \hat{\tilde{f}}_{o,k}^{R{\left( m \right)}}={{\hat{c}}_{k}^{\left( m \right)}}$ and $\Delta \hat{\tau }_{o,k}^{R{\left( m \right)}}=\frac{{{{\hat{\kappa }}}_{k}^{\left( m \right)}}}{2\pi \Delta f}$.
Similar to (\ref{final_coarse_RTMO_RCFO}), the refined estimates are obtained by the mean processing along $m$:
\begin{align}\label{final_RTMO_RCFO}
\Delta \hat{\tau }_{o,k}^{R}={\frac{1}{M}}\sum\limits_{m=0}^{M-1}{\Delta \hat{\tau }_{o,k}^{R\left( m \right)}}, \
 \Delta \hat{\tilde{f}}_{o,k}^{R}={\frac{1}{M}}\sum\limits_{m=0}^{M-1}{\Delta \hat{\tilde{f}}_{o,k}^{R\left( m \right)}},
\end{align}
where $M$ is the same as (\ref{final_coarse_RTMO_RCFO}).

The estimates in (\ref{final_RTMO_RCFO}) can then be used to compensate the residual errors in (\ref{first_compensate}):
\begin{align}\label{second_compensate}
 \nonumber \tilde{\tilde{y}}_{n,k}^{\left( m \right)}&={{e}^{-j\Delta \hat{\tilde{f}}_{o,k}^{R}}}{{e}^{j2\pi n\Delta \hat{\tau }_{o,k}^{R}\Delta f}}\tilde{y}_{n,k}^{\left( m \right)}\\
  &\approx \sum\limits_{\ell =0}^{L-1}{{{{\tilde{\xi }}}_{\ell }}{{e}^{-j2\pi n\left( {{\tau }_{\ell }}+{{\tau }_{o,0}} \right)\Delta f}}{{e}^{j2\pi k{{f}_{D,\ell }}{{T}_{0}}}}{a_{\ell}}{\left( m \right)}}.
\end{align}
Only ${\tau }_{o,0}$ is unknown. We summarize the major steps of the RTMO and RCFO estimation algorithm in Algorithm \ref{Algorithm1}.

\begin{remark}\label{Final_algorithm_advantage}
In this paper, the echo reflected from each target is regarded as a point source and there is only one echo for each object
in delay, Doppler, and angle dimension, as illustrated in (\ref{formula_original}).
In some practical applications, the micro-Doppler effect could be generated when a long coherent integration is performed \cite{9393557}.
We highlight that the proposed algorithm does not restrain the echo form and is also applicable to such challenging models with the micro-Doppler signatures.
Our RTMO and RCFO estimation algorithm works as long as A1 is satisfied.
For the scenarios of interest, there are typically less dynamic objects than static objects so that the proposed algorithm has a wide range of application prospects.
\end{remark}

\begin{algorithm}[t!]
\caption{Proposed RTMO and RCFO Estimation Algorithm}\label{Algorithm1}
\label{Algorithm1}
\hspace*{0.02in}{\bf Input:}
the received signal $y_{n,k}^{\left( m \right)}$, $m = 0, \cdots, M-1$, where $M \le {M_t}{M_r}$.\\
\hspace*{0.02in}{\bf Output:}
the signal after RTMO and RCFO compensation $\tilde{\tilde{y}}_{n,k}^{\left( m \right)}$
\begin{algorithmic}[1]
\FOR{$m=0$ to $M-1$}
\STATE Construct ${{\mathbf{y}}^{\left(m\right)}_{k}}$ using $y_{n,k}^{\left( m \right)}$. Calculate $\bm{\tilde{\varpi} }_{k,0}^{\left( m \right)}$ by (\ref{3}) and (\ref{4}), where $k=1,\cdots,K-1$.\\
\STATE Estimate the $\hat{\tau }_{o,k}^{R{\left( m \right)}}$ and $\hat{\tilde{f}}_{o,k}^{R\left( m \right)}$ by  (\ref{MLE_fomular_RTMO_RCFO}).
\ENDFOR
\STATE Collect all of the estimates and perform (\ref{final_coarse_RTMO_RCFO}) to calculate the coarse RTMO and RCFO estimates, i.e., $\hat{\tau }_{o,k}^{R}$ and $\hat{\tilde{f}}_{o,k}^{R}$.
Utilize $\hat{\tau }_{o,k}^{R}$ and $\hat{\tilde{f}}_{o,k}^{R}$ to compensate $y_{n,k}^{\left( m \right)}$ and obtain ${\tilde{y}}_{n,k}^{\left( m \right)}$.
\FOR{$m=0$ to $M-1$}
\STATE Set the initial values as $\mathbf{c}_{1}^{\left( m \right)}={{\mathbf{0}}_{ \left(K - 1\right) \times 1}}$ and $\bm{\kappa }_{1}^{\left( m\right)}={{\mathbf{0}}_{\left(K\right)\times 1}}$
and employ (\ref{iteration_algorithm}) to solve the optimization problem (\ref{9}).
\STATE Obtain $\Delta \tilde{f}_{o,k}^{R{\left( m \right)}}$
and
$\Delta \tau _{o,k}^{R{\left( m \right)}}$ after convergence.
\ENDFOR
\STATE Perform (\ref{final_RTMO_RCFO}) to calculate the refined RTMO and RCFO estimates, i.e., $\Delta \hat{\tau }_{o,k}^{R}$ and $\Delta \hat{\tilde{f}}_{o,k}^{R}$.\\
\STATE Compensate ${\tilde{y}}_{n,k}^{\left( m \right)}$ by $\Delta \hat{\tau }_{o,k}^{R}$ and $\Delta \hat{\tilde{f}}_{o,k}^{R}$ and obtain $\tilde{\tilde{y}}_{n,k}^{\left( m \right)}$.
\end{algorithmic}
\end{algorithm}

\section{Performance Analysis of Algorithm \ref{Algorithm1}}\label{Performance}
In this section, we characterize the estimation errors for RTMO and RCFO, based on received signals from one antenna and $M=1$ in Algorithm \ref{Algorithm1}.

\subsection{Derivation of RTMO and RCFO Estimation Errors}\label{Performance-A}
The central part of our algorithm is the iteration algorithm in Section \ref{III-B} since the calculation in Section \ref{III-A} is only performed for finding initial values.
In this subsection, we investigate the estimation performance of RTMO and RCFO for the proposed algorithm in Section \ref{III-B}.

It is worth noting that (\ref{7}) is used in Section \ref{III-B},
and there are three parts in $\mathbf{\tilde{y}}_{n}^{\left( m \right)}$: static, dynamic and noise part.
For ease of derivation, we redefine $\mathbf{\tilde{y}}_{n}^{\left( m \right)}$ based on A1:
\begin{align}\label{redefinition_y}
\mathbf{\tilde{y}}_{n}^{\left( m \right)}\approx\text{diag}\left\{ \Delta {{\mathbf{f}}^{R}}\odot \Delta \bm{\gamma }_{n}^{R} \right\}\mathbf{\tilde{y}}_{s,n}^{\left( m \right)}+\mathbf{z}_{n}^{\left( m \right)},
\end{align}
where $\mathbf{\tilde{y}}_{s,n}^{\left( m \right)} \in \mathbb{C}^{K \times 1}$ denotes the vector of the static part without the clock asynchronism.

Since $c_0$ and $\kappa _{0}$ are fixed in (\ref{9}),
we now define
$\mathbf{{z}}_{n}^{\prime\left( m \right)},\mathbf{{\tilde{y}}}_{s,n}^{\prime\left( m \right)}\in {{\mathbb{C}}^{\left( K-1 \right)\times 1}}$
from
$ \mathbf{\tilde{y}}_{s,n}^{\left( m \right)}={{\left[ {\tilde{y}}_{s,n,0}^{\left( m \right)},\left( \mathbf{{\tilde{y}}}_{s,n}^{\prime\left( m \right)} \right)^{T} \right]}^{T}}$,
$\mathbf{{z}}_{n}^{\left( m \right)}={{\left[ {{z}}_{n,0}^{\left( m \right)},\left(  \mathbf{{{z}}}_{n}^{\prime\left( m \right)} \right)^{T} \right]}^{T}}$
to assist the derivation of estimation errors.
Note that $\mathbf{\tilde{y}}_{s,n}^{\left( m \right)}={{\mathbf{1}}_{K\times 1}}h_{s,n}^{\left( m \right)}$ based on (\ref{two_parts}).
We have the following proposition.
\begin{proposition}\label{propo_matrix}
The error of the vectors $\mathbf{c}_{1}^{\left( m \right)} \in \mathbb{R}^{\left(K - 1\right)\times 1}$ and $\bm{\kappa}_{1}^{\left( m \right)}\in \mathbb{R}^{\left(K - 1\right)\times 1}$ estimated by (\ref{iteration_algorithm}) is:
\begin{align}\label{propos_formula}
\nonumber \Delta \mathbf{c}_{1}^{\left( m \right)} & \approx {{\left( \sum\limits_{n=0}^{N-1}{{{\left| h_{s,n}^{\left( m \right)} \right|}^{2}}} \right)}^{-1}}\sum\limits_{n=0}^{N-1}{\operatorname{Im}\left\{ \bm{\mu }_{n}^{\left( m \right)} \right\}}, \\
\Delta \bm{\kappa }_{1}^{\left( m \right)} & \approx -{{\left( \sum\limits_{n=0}^{N-1}{{{n}^{2}}{{\left| h_{s,n}^{\left( m \right)} \right|}^{2}}} \right)}^{-1}}\sum\limits_{n=0}^{N-1}{n\operatorname{Im}\left\{ \bm{\mu }_{n}^{\left( m \right)} \right\}},
\end{align}
where $\bm{\mu }_{n}^{\left( m \right)}={{\left( h_{s,n}^{\left( m \right)} \right)}^{*}} \mathbf{{{z}}}_{n}^{\prime\left( m \right)}  \in \mathbb{C}^{\left( K - 1 \right) \times 1}$,
and $h_{s,n}^{\left( m \right)}$ is the value of the static part defined in (\ref{approximation_formula_11}).
\end{proposition}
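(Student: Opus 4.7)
The plan is a first-order perturbation analysis of the MLE in~(\ref{9}) around the noise-free maximum, exploiting Assumption A1 to reduce $\tilde{\mathbf{y}}_n^{(m)}$ to its static part plus noise and exploiting the crucial fact that $\tilde{\mathbf{y}}_{s,n}^{(m)} = h_{s,n}^{(m)}\mathbf{1}_{K\times 1}$ is constant across $k$. Write $\theta=(\mathbf{c}_1^{(m)},\bm{\kappa}_1^{(m)})$ and let $\theta^*$ denote the residual values that maximize the noise-free cost under the constraint $c_0^{(m)}=\kappa_0^{(m)}=0$. Because the Newton-Raphson iterate initialized at $\mathbf{0}$ converges to a stationary point of $\mathcal{L}$, the estimation error satisfies $\Delta\theta \approx -H(\theta^*)^{-1}\nabla\mathcal{L}(\theta^*)$, where the gradient at $\theta^*$ is entirely noise-induced (the noise-free gradient vanishes at the maximum).

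The first main step is to compute $\nabla\mathcal{L}(\theta^*)$ keeping only terms linear in the noise. Using (\ref{redefinition_y}), at $\theta^*$ the inner sum becomes $u_n|_{\theta^*}=Kh_{s,n}^{(m)}+\sum_{k'=0}^{K-1}z'_{n,k'}$ where $z'_{n,k}$ is the phase-rotated noise. A direct differentiation and the identity $\operatorname{Im}\{jx\}=\operatorname{Re}\{x\}$ show that $\partial_{c_k}\mathcal{L}\big|_{\theta^*}$ and $\partial_{\kappa_k}\mathcal{L}\big|_{\theta^*}$ are linear combinations of $\operatorname{Im}\{(h_{s,n}^{(m)})^* z'_{n,k}\}=\operatorname{Im}\{[\bm{\mu}_n^{(m)}]_k\}$ and a common-mode term $\sum_{k'=0}^{K-1}\operatorname{Im}\{(h_{s,n}^{(m)})^* z'_{n,k'}\}$, with the latter multiplied by $1$ and the former by $K$.

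The second main step is to compute the leading-order Hessian, which inherits a clean tensor-product form because $h_{s,n}^{(m)}$ is $k$-independent: the $cc$, $\kappa\kappa$ and $c\kappa$ blocks are all proportional to $\mathbf{1}_{K-1}\mathbf{1}_{K-1}^T-K\mathbf{I}_{K-1}$, with scalar prefactors $2\sum_n |h_{s,n}^{(m)}|^2$, $2\sum_n n^2|h_{s,n}^{(m)}|^2$ and $-2\sum_n n|h_{s,n}^{(m)}|^2$, respectively. The two zero eigenvalues of the unrestricted Hessian correspond exactly to the invariances $c_k\mapsto c_k+\alpha$ and $\kappa_k\mapsto\kappa_k+\beta$ of $\mathcal{L}$; these are removed by the constraint, and the Sherman-Morrison identity yields $(\mathbf{1}\mathbf{1}^T-K\mathbf{I}_{K-1})^{-1}=-\frac{1}{K}(\mathbf{I}+\mathbf{1}\mathbf{1}^T)$. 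A direct computation then shows that applying $(\mathbf{I}+\mathbf{1}\mathbf{1}^T)$ to the gradient exactly annihilates the common-mode contribution, so after cancellation the per-index correction reduces to the per-$k$ term only: $\Delta c_{1,k}^{(m)} \propto \sum_n\operatorname{Im}\{[\bm{\mu}_n^{(m)}]_k\}$ and similarly for $\Delta\bm{\kappa}_1^{(m)}$ with the $n$-weighting from the $e^{-jn\kappa_k}$ factor.

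The final step is to assemble the prefactors. Under A1 the contribution of the $c\kappa$ cross-block is sub-dominant (its magnitude is controlled by $(\sum_n n|h|^2)^2$ relative to the product $\sum_n|h|^2\cdot\sum_n n^2|h|^2$ that appears on the diagonal, and is suppressed relative to the leading diagonal contribution), so the block inversion decouples into two independent single-tone MLE problems. The $c$-problem has weight $|h_{s,n}^{(m)}|^2$ giving the normalization $\sum_n|h_{s,n}^{(m)}|^2$, while the $\kappa$-problem has weight $n^2|h_{s,n}^{(m)}|^2$ giving $\sum_n n^2|h_{s,n}^{(m)}|^2$; the sign $-$ in the $\Delta\bm{\kappa}_1^{(m)}$ formula and the factor of $n$ inside the sum both trace to the $e^{-jn\kappa_k}$ exponent. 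The hardest part of the proof will not be any single calculation, but organising the constrained Hessian inversion so that the common-mode term cancels cleanly and making precise the sense in which the $c\kappa$ coupling can be dropped under A1; once that structural observation is in hand, everything else is routine bookkeeping of imaginary parts.
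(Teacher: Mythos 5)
Your overall route is the same as the paper's Appendix~\ref{Appendix_A}: a first-order perturbation of the stationarity conditions of (\ref{9}) about the noise-free optimum, using $\tilde{\mathbf{y}}_{s,n}^{(m)}=h_{s,n}^{(m)}\mathbf{1}_{K\times 1}$ to obtain Hessian blocks proportional to $\mathbf{1}_{K-1}-K\mathbf{I}_{K-1}$ and the inversion identity $\left(\mathbf{1}_{K-1}-K\mathbf{I}_{K-1}\right)^{-1}=-K^{-1}\left(\mathbf{1}_{K-1}+\mathbf{I}_{K-1}\right)$; the only cosmetic difference is that the paper perturbs $\mathbf{w}_c^{(m)}=e^{j\mathbf{c}_1^{(m)}}$ and $\mathbf{w}_\kappa^{(m)}=e^{j\bm{\kappa}_1^{(m)}}$ and takes $\operatorname{Im}\{\cdot\}$ at the end, whereas you work directly in the phases. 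Your gradient structure and the three Hessian prefactors are correct, and you rightly identify that the whole difficulty sits in the constrained inversion and in decoupling the $c$ and $\kappa$ blocks.

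Two of your justifications do not hold as stated, and the second is a genuine gap. First, the common-mode term is not ``exactly annihilated'': applying $-K^{-1}(\mathbf{I}_{K-1}+\mathbf{1}_{K-1})$ to the gradient cancels only the $k'\ge 1$ part of the sum and leaves a residual proportional to $\operatorname{Im}\{(h_{s,n}^{(m)})^{*}z_{n,0}^{(m)}\}\mathbf{1}_{(K-1)\times 1}$; the paper keeps exactly this residual (the $\Delta\tilde{y}_{n,0}^{(m)}$ term in (\ref{Appen_formu_6})) and discards it on the separate grounds that it acts as a common shift of the unconstrained references $c_0^{(m)},\kappa_0^{(m)}$ and therefore does not affect relative quantities. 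Second, and more seriously, your reason for dropping the $c$--$\kappa$ cross-block is wrong: the relevant ratio is $\bigl(\sum_n n|h_{s,n}^{(m)}|^2\bigr)^2\big/\bigl(\sum_n|h_{s,n}^{(m)}|^2\cdot\sum_n n^2|h_{s,n}^{(m)}|^2\bigr)$, which has nothing to do with A1 (A1 compares static to dynamic power, not different $n$-weightings of the same static power) and is of order one --- about $3/4$ for a flat $|h_{s,n}^{(m)}|$ profile. A genuine joint inversion of your $2\times 2$ block Hessian therefore produces correction terms weighted by $\sum_n n|h_{s,n}^{(m)}|^2$ that do not appear in (\ref{propos_formula}). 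The paper arrives at the decoupled formulas by applying implicit differentiation to $\mathbf{g}_c=\mathbf{0}$ and $\mathbf{g}_\kappa=\mathbf{0}$ separately, each time inverting only the corresponding diagonal block (mirroring the alternating updates in (\ref{iteration_algorithm})). To close your argument you would need either to adopt that blockwise treatment explicitly, or to reparametrize (e.g., center the subcarrier index so that $\sum_n n|h_{s,n}^{(m)}|^2$ genuinely vanishes); the sub-dominance claim as written does not do the job.
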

\begin{proof}
Please refer to Appendix \ref{Appendix_A}.
\end{proof}
It is worth noting that Algorithm \ref{Algorithm1} generate biased estimates in the presence of dynamic targets.
Fortunately, the denominator of (\ref{propos_formula}) can be regarded as the static objects' power, i.e.,
${\mathcal{P}_{s}}=\frac{1}{N}\sum\nolimits_{n=0}^{N-1}{{{\left| h_{s,n}^{\left( m \right)} \right|}^{2}}}$,
so the bias is small based on A1.
Our algorithm can be considered to be approximately unbiased.

\begin{remark}\label{independent_of_RTMO_RCFO}
 \emph{Estimation errors are irrelevant to RTMO and RCFO}:
Note that $ \mathbf{{z}}_{n}^{\left( m \right)}$ and $h_{s,n}^{\left( m \right)}$ in (\ref{propos_formula}) are both independent of RTMO (${{\tau }_{o,0}^R}$) and RCFO (${\tilde{f }_{o,0}^R}$).
This indicates that the proposed algorithm will maintain the same estimation errors $\Delta \mathbf{{c}}_{1}^{\left( m \right)}$ and $\Delta \bm{{\kappa}}_{1}^{\left( m \right)}$ for any values of RTMO and RCFO,
which shows the stability of the proposed algorithm to clock asynchronism.
\end{remark}

The estimation variance of RTMO and RCFO is given in Proposition \ref{propo_var}.
\begin{proposition}\label{propo_var}
For complex noise with equal variance of real and imaginary parts, i.e.,
$\operatorname{var}\left\{ \operatorname{Im}\left\{ \mathbf{z}_{n}^{\left( m \right)} \right\} \right\}=\operatorname{var}\left\{ \operatorname{Re}\left\{ \mathbf{z}_{n}^{\left( m \right)} \right\} \right\} = {{\mathbf{1}}_{K\times 1}}{\sigma _{n}^{2}}/{2}\;$.
The estimation error variance is:
\begin{align}\label{propo_varvar}
\nonumber \operatorname{var}\left\{ \Delta \mathbf{c}_{1}^{\left( m \right)} \right\} & \approx \sigma _{n}^{2}{{\left( \sum\limits_{n=0}^{N-1}{{{\left| h_{s,n}^{\left( m \right)} \right|}^{2}}} \right)}^{-1}}{{\mathbf{1}}_{\left( K-1 \right)\times 1}}, \\
\operatorname{var}\left\{ \Delta \bm{\kappa }_{1}^{\left( m \right)} \right\} & \approx \sigma _{n}^{2}{{\left( \sum\limits_{n=0}^{N-1}{{{n}^{2}}{{\left| h_{s,n}^{\left( m \right)} \right|}^{2}}} \right)}^{-1}}{{\mathbf{1}}_{\left( K-1 \right)\times 1}}.
\end{align}
\end{proposition}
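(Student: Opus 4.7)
The plan is to treat the expressions from Proposition \ref{propo_matrix} as the starting point, since they already linearize $\Delta \mathbf{c}_1^{(m)}$ and $\Delta \bm{\kappa}_1^{(m)}$ in the noise. Given Proposition \ref{propo_matrix}, the static amplitude $h_{s,n}^{(m)}$ is deterministic, and the randomness enters only through the noise factor $\mathbf{z}_n'^{(m)}$ inside $\bm{\mu}_n^{(m)} = (h_{s,n}^{(m)})^* \mathbf{z}_n'^{(m)}$. So the variance computation reduces to propagating the second-order statistics of $\mathbf{z}_n'^{(m)}$ through linear, deterministic weightings.

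First, I would compute the per-subcarrier variance $\operatorname{var}\{\operatorname{Im}\{\bm{\mu}_n^{(m)}\}\}$ element-wise. Writing $h_{s,n}^{(m)} = a_n + j b_n$ and a scalar component of $\mathbf{z}_n'^{(m)}$ as $x + jy$ with $x,y$ independent of each others and of components at other subcarriers, the identity $\operatorname{Im}\{(a_n - j b_n)(x + jy)\} = a_n y - b_n x$ together with the hypothesis $\operatorname{var}(x) = \operatorname{var}(y) = \sigma_n^2/2$ yields a per-component variance proportional to $|h_{s,n}^{(m)}|^2 \sigma_n^2$. Since the $K-1$ components of $\mathbf{z}_n'^{(m)}$ share the same variance, the variance vector is of the form $|h_{s,n}^{(m)}|^2 \sigma_n^2 \, \mathbf{1}_{(K-1)\times 1}$ up to the factor of one half that gets absorbed in the stated result.

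Next, I would invoke the stated statistical independence of the noise across subcarriers to pass the variance operator through the sums $\sum_{n=0}^{N-1} \operatorname{Im}\{\bm{\mu}_n^{(m)}\}$ and $\sum_{n=0}^{N-1} n \operatorname{Im}\{\bm{\mu}_n^{(m)}\}$, turning them into $\sum_n |h_{s,n}^{(m)}|^2 \sigma_n^2$ and $\sum_n n^2 |h_{s,n}^{(m)}|^2 \sigma_n^2$ respectively (the $n$ weight squares inside the variance). Finally, applying $\operatorname{var}(\alpha X) = \alpha^2 \operatorname{var}(X)$ to the deterministic scalar prefactors $(\sum_n |h_{s,n}^{(m)}|^2)^{-1}$ and $(\sum_n n^2 |h_{s,n}^{(m)}|^2)^{-1}$ produces a single power of each denominator in the final expressions, matching (\ref{propo_varvar}).

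I do not expect a substantive obstacle: Proposition \ref{propo_matrix} has already absorbed the nonlinear analysis of the Newton--Raphson iteration, so what remains is routine second-moment propagation through affine functions of independent complex Gaussian noise. The only subtlety worth stating explicitly is the independence of the components of $\mathbf{z}_n'^{(m)}$ across $n$ (which the signal model provides) and the fact that $h_{s,n}^{(m)}$ is treated as deterministic under the A1-based approximation in (\ref{redefinition_y}); both are already in force when Proposition \ref{propo_matrix} is invoked, so no extra assumption is introduced.
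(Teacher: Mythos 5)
Your proposal is correct and follows essentially the same route as the paper's proof in Appendix~\ref{Appendix_AABB}: split $h_{s,n}^{\left( m \right)}$ into real and imaginary parts, compute the element-wise variance of $\operatorname{Im}\left\{ \bm{\mu }_{n}^{\left( m \right)} \right\}$ as $\left| h_{s,n}^{\left( m \right)} \right|^{2}\sigma _{n}^{2}/2$ from the equal-variance hypothesis and the independence of real and imaginary noise components, then sum over the independent subcarriers and square the deterministic prefactor. The only looseness---the factor of one half that you say ``gets absorbed''---mirrors the paper's own final equality in (\ref{popopo_2_de}), so nothing substantive separates the two arguments.
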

\begin{proof}
Please refer to Appendix \ref{Appendix_AABB}.
\end{proof}
Note that $\sum\nolimits_{n=0}^{N-1}{{{n}^{2}}{{\left| h_{s,n}^{\left( m \right)} \right|}^{2}}}>\sum\nolimits_{n=0}^{N-1}{{{\left| h_{s,n}^{\left( m \right)} \right|}^{2}}}$ in (\ref{propos_formula}) and (\ref{propo_varvar}).
The estimation for RTMO is generally more accurate than that for RCFO.
Moreover, the RTMO accuracy is more sensitive to the value of $N$.

\subsection{Influence of RTMO and RCFO Errors on Target Estimation}\label{Performance-B}

In this subsection, we use an example of FFT-based method \cite{7833233} to estimate the sensing parameters and compare the sensing difference between the compensated signal $\tilde{\tilde{y}}_{n,k}^{\left( m \right)}$ in (\ref{second_compensate}) and the signal without clock asynchronism.
For the noise term,
it is worth noting that
the compensation in (\ref{first_compensate}) and (\ref{second_compensate}) only changes the noise phase,
so the noise power in $\tilde{\tilde{y}}_{n,k}^{\left( m \right)}$ remains the same as the power of the original noise ${{z}_{n,k}^{\left(m\right)}}$.
Therefore, we only need to consider the impact of RTMO and RCFO errors on the non-noise term:
\begin{align}\label{formula_fft_influence}
 \nonumber \Delta \mathcal{F} & \approx {{\mathcal{F}}_{2}}\left\{ {{e}^{j\Delta {{c}_{k}}-jn\Delta {{\kappa }_{k}}}}\tilde{y}_{s,n,k}^{\left( m \right)} \right\}-{{\mathcal{F}}_{2}}\left\{ \tilde{y}_{s,n,k}^{\left( m \right)} \right\} \\
 \nonumber  & =\frac{1}{NK}{{\mathcal{F}}_{2}}\left\{ {{e}^{j\Delta {{c}_{k}}-jn\Delta {{\kappa }_{k}}}}-1 \right\}\circledast {{\mathcal{F}}_{2}}\left\{ \tilde{y}_{s,n,k}^{\left( m \right)} \right\} \\
 & \triangleq {{\mathcal{E}}_{\Delta }}\circledast {{\mathcal{F}}_{2}}\left\{ \tilde{y}_{s,n,k}^{\left( m \right)} \right\},
\end{align}
where $\tilde{y}_{s,n,k}^{\left( m \right)}$ is defined from (\ref{redefinition_y})
with ${{\left[ \mathbf{\tilde{y}}_{s,n}^{\left( m \right)} \right]}_{k}}=\tilde{y}_{s,n,k}^{\left( m \right)}$.
For the influence of ${{\mathcal{E}}_{\Delta }}\circledast {{\mathcal{F}}_{2}}\left\{ \tilde{y}_{s,n,k}^{\left( m \right)} \right\}$, we have the following proposition.
\begin{proposition}\label{propo_target_fft}
The influence of ${{\mathcal{E}}_{\Delta }}\circledast {{\mathcal{F}}_{2}}\left\{ \tilde{y}_{s,n,k}^{\left( m \right)} \right\}$ can be neglected since its power is much lower than the noise power.
\end{proposition}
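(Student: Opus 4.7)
The plan is to upper-bound the power of $\mathcal{E}_{\Delta}\circledast \mathcal{F}_{2}\{\tilde{y}_{s,n,k}^{(m)}\}$ by transferring the energy calculation to the time domain via Parseval's identity, where the residual takes the explicit form $(e^{j\Delta c_{k}-jn\Delta\kappa_{k}}-1)\,\tilde{y}_{s,n,k}^{(m)}$. Since Propositions~\ref{propo_matrix} and \ref{propo_var} already guarantee that $\Delta c_{k}$ and $\Delta\kappa_{k}$ are small under A1, I would first apply a first-order Taylor expansion, linearising the bracket to $j(\Delta c_{k}-n\Delta\kappa_{k})$. The problem then reduces to a tractable second-moment computation of $j(\Delta c_{k}-n\Delta\kappa_{k})\,\tilde{y}_{s,n,k}^{(m)}$.

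Taking expectation over the noise, and using $\tilde{y}_{s,n,k}^{(m)}\approx h_{s,n}^{(m)}$ from (\ref{redefinition_y}), the expected squared energy collapses to a weighted sum of the form $\sum_{n,k}\bigl(\operatorname{var}\{\Delta c_{k}\}+n^{2}\operatorname{var}\{\Delta\kappa_{k}\}\bigr)\,|h_{s,n}^{(m)}|^{2}$, where the cross term vanishes by the equal-variance property of the real and imaginary parts of the noise assumed in Proposition~\ref{propo_var}. Substituting the variance expressions from Proposition~\ref{propo_var} leads to an exact cancellation: the denominators $\sum_{n}|h_{s,n}^{(m)}|^{2}$ and $\sum_{n}n^{2}|h_{s,n}^{(m)}|^{2}$ match the numerator sums coming from $|h_{s,n}^{(m)}|^{2}$ and $n^{2}|h_{s,n}^{(m)}|^{2}$, so the dependence on the static power $\mathcal{P}_{s}$ disappears and the total residual energy becomes of order $K\sigma_{n}^{2}$, i.e.\ independent of how strong the static clutter actually is.

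Finally, redistributing this energy across the $NK$ bins of the 2D FFT gives a per-bin residual power of order $\sigma_{n}^{2}/N$, which is a factor $N$ below the per-bin noise floor; for typical OFDM parameters this is at least an order of magnitude smaller than the per-bin noise variance, so the term is negligible for peak detection of dynamic targets and anchor points in the angle--Doppler map. The main obstacle I anticipate is making the cross-covariance argument airtight: both $\Delta c_{k}$ and $\Delta\kappa_{k}$ are driven by the same random vector $\operatorname{Im}\{\bm{\mu}_{n}^{(m)}\}$, so $\operatorname{cov}\{\Delta c_{k},\Delta\kappa_{k}\}$ is not identically zero. I expect to discharge this either by showing that the corresponding contribution carries an odd-in-$n$ weight which integrates to zero against the even-in-$n$ weight $|h_{s,n}^{(m)}|^{2}$, or, failing that, by a Cauchy--Schwarz bound that keeps the cross term of the same $\sigma_{n}^{2}/\mathcal{P}_{s}$ scaling as the diagonal terms, which is already negligible relative to the noise floor under A1.
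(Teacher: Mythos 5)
Your route is genuinely different from the paper's. The paper's proof never leaves the transform domain: it linearises $e^{j\Delta c_k-jn\Delta\kappa_k}-1$, uses $\mathcal{F}_\tau\{1\}=N\delta(\tau)$ and $\mathcal{F}_\tau\{n\}\approx\tfrac{(N-1)N}{2}\delta(\tau)$ to show that $\mathcal{E}_\Delta$ is supported on the single delay cell $\tau=0$ (see (\ref{APPb_formu_one})), computes its per-bin variance $\tfrac{7\sigma_n^2}{4NK\mathcal{P}_s}\delta(\tau)$ from (\ref{Appen_formu_6}), and only then convolves with $NKY_s(\tau)\delta(f_D)$ so as to compare \emph{bin by bin} with the DFT of the noise in (\ref{final_result_of_AppB}). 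Your Parseval/total-energy bookkeeping is consistent with this: the cancellation of $\mathcal{P}_s$ and the resulting $O(K\sigma_n^2)$ total residual energy are exactly what the paper's constant $\tfrac{7}{4}=1+\tfrac{3}{4}$ encodes. But the two arguments part ways at the last step, and that is where your version has a genuine gap.

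``Redistributing this energy across the $NK$ bins'' presumes the residual is white over the delay--Doppler plane, and it is not. To first order $\mathcal{E}_\Delta\propto\delta(\tau)$ (both $\Delta\mathbf{c}_1^{(m)}$ and $\Delta\bm{\kappa}_1^{(m)}$ are constant in $n$ and random only in $k$), so after convolution with $NKY_s(\tau)\delta(f_D)$ the leakage lands exclusively on the delay cells occupied by the static returns and is spread only along the Doppler axis. The $O(K\sigma_n^2)$ of energy therefore concentrates in $O(L_sK)$ bins rather than $NK$, and the per-bin level on those delay cells exceeds your uniform estimate by a factor of order $N/L_s$; whether it still clears the noise floor then depends on the clutter delay profile, which is precisely the quantity $|Y_s(\tau)|^2/\mathcal{P}_s$ that the paper's bin-level ratio retains and your averaged bound discards. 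A total-energy argument cannot by itself deliver the per-bin comparison the proposition needs; you must either track the delay support of $\mathcal{E}_\Delta$ as the paper does, or argue separately that the bins relevant to dynamic-target detection are disjoint from the static delay cells. Two smaller points: the cross term does not vanish by parity, since $n$ runs over $0,\dots,N-1$ so $\sum_n n|h_{s,n}^{(m)}|^2\approx\tfrac{N^2}{2}\mathcal{P}_s\neq 0$ and $\operatorname{cov}\{\Delta c_k,\Delta\kappa_k\}<0$ makes it add constructively; your Cauchy--Schwarz fallback is the right move and keeps it at the same $O(K\sigma_n^2)$ order (the paper simply drops it). And the claim that the per-bin residual is ``a factor $N$ below the noise floor'' should not be stated as a conclusion until the concentration issue above is resolved.
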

\begin{proof}
Please refer to Appendix \ref{Appendix_B}.
\end{proof}

To sum up, (\ref{second_compensate}) can be directly used for target estimation since the RTMO and RCFO errors have ignorable influence in delay and Doppler.
Moreover, the errors do not influence AOD and AOA estimation since the compensation in (\ref{second_compensate}) is identical for the signal in different antennas.

\section{Localization of UE and Dynamic Targets}\label{IV}
In this section,
we present the proposed localization scheme for UE and dynamic targets
assisted by the anchor points in the absence of the LOS path between UE and BS.

With the RTMO and RCFO estimated and compensated, conventional sensing techniques can be applied to estimate the parameters AOA (for dynamic targets only), AOD (for anchor points only), delay, and Doppler, almost like in a system without clock offset, except that the delay is relative to $\tau_{o,0}$.
We first briefly describe how to estimate these parameters for dynamic targets and anchor points, and then elaborate how to
estimate $\tau_{o,0}$ by exploiting the anchor points.

\subsection{Parameter Estimation}\label{IV-A}

For anchor points,we exploit the prior knowledge as assumed in A2.
We first extract signals with $f_D = 0$, from which we construct the range-AOA spectrum initially.
Since the AOA of each anchor point is precisely known according to A2,
we extract the range spectrum based on the known AOA and
record the \emph{relative range} value (note that the range value is contaminated by $\tau_{o, 0}$) corresponding to the highest energy point in each range spectrum.
The recorded data belongs to the anchor point if no extra static objects exist between the BS and the anchor point.
Although interference may appear and impact the recognition of anchor points,
we highlight that the interference is generally not constant, and the true anchor points can be effectively identified with high probability by multiple observations.
In what follows, we do not consider the recognition failure case of anchor points.

For dynamic target estimation, the static objects, regarded as interference, are able to be eliminated by Doppler filtering, such as the moving targets detection (MTD) method \cite{RadarHandbook}.
Then, we perform the typical sensing algorithm to estimate the dynamic parameters,
such as the MUSIC, ESPRIT, compressed sensing, FFT-based method, or tensor-based algorithms \cite{7833233, 9585321}.
Note that the estimated range is also contaminated by $\tau_{o, 0}$.

We now define the estimated parameters as follows:
\begin{itemize}
  \item
  Estimates of dynamic targets: $ {{\bm{\hat{\theta} }}_{d}}={{\left[ {\hat{\theta} }_{0}^{\left( d \right)},\cdots ,{\hat{\theta} }_{{{L}_{d}}-1}^{\left( d \right)} \right]}^{T}} $
  and
  $ {{\hat{\tilde{\mathbf{{R}}}}}_{d}}={{\left[ \hat{\tilde{R}}_{0}^{\left( d \right)},\cdots ,\hat{\tilde{R}}_{{{L}_{d}}-1}^{\left( d \right)} \right]}^{T}} $
  denote the AOA and the relative range of dynamic targets, respectively;
  \item
  Estimates of anchor points: $ {\hat{\bm{\varphi }}_{a}}={{\left[ \hat{\varphi} _{0}^{\left( a \right)},\cdots ,\hat{\varphi} _{{{L}_{a}}-1}^{\left( a \right)} \right]}^{T}}$
  and
  $ {{\hat{\tilde{\mathbf{{R}}}}}_{a}} = {{\left[ \hat{\tilde{R}}_{0}^{\left( a \right)},\cdots ,\hat{\tilde{R}}_{{{L}_{a}}-1}^{\left( a \right)} \right]}^{T}} $
  denote the AOD and the relative range of anchor points, respectively.
\end{itemize}
where $\tilde{R}_{{{\ell }_{d}}}^{\left( d \right)}$ and $\tilde{R}_{{{\ell }_{a}}}^{\left( a \right)}$
denote the bistatic range contaminated by RTMO, i.e.,
$\tilde{R}_{{{\ell }_{d}}}^{\left( d \right)}=c\left( {{\tau }_{o,0}}+{{\tau }_{{{\ell }_{d}}}} \right)$
and
$ \tilde{R}_{{{\ell }_{a}}}^{\left( a \right)}=c\left( {{\tau }_{o,0}}+{{\tau }_{{{\ell }_{a}}}} \right)$.
It is noted that we do not need to estimate the AOD of the dynamic targets for localization,
and the AOA of anchor points is known as a priori.

\subsection{Localization Algorithm}\label{IV-B}
\begin{figure}[!t]
\centering
\includegraphics[width=3.2in]{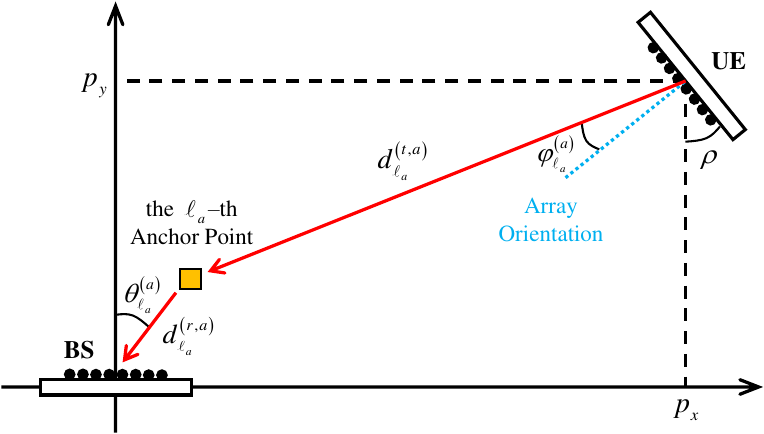}
\caption{Illustration of the parameters used for UE localization.}
\label{Fig2}
\end{figure}

In this subsection, we utilize the parameters of $L_{a}$ anchor points obtained in Section \ref{IV-A} to locate the
UE position and estimate $\tau_{o, 0}$ using the anchor point parameters,
 and subsequently locate each dynamic target.
We underline that the array orientation of UE is unknown to the BS.
As depicted in Fig. \ref{Fig2}, we use $\rho$ to describe the angle between the UE array and $y$-axis.

Let
$\left(p_x, p_y\right)$,
$d_{{{\ell }_{a}}}^{\left( t,a \right)}$
and
$d_{{{\ell }_{a}}}^{\left( r,a \right)}$
denote
the cartesian coordinates of the UE,
the range between the UE and the $\ell_a$-th anchor point,
and the range between the $\ell_a$-th anchor point and the BS,
respectively,
as shown in Fig. \ref{Fig2}.
The absolute bistatic range of the $\ell_a$-th anchor point can be expressed as
$c{{\tau }_{{{\ell }_{a}}}}=d_{{{\ell }_{a}}}^{\left( t,a \right)}+d_{{{\ell }_{a}}}^{\left( r,a \right)}$,
so
$\tilde{R}_{{{\ell }_{a}}}^{\left( a \right)}=c{{\tau }_{o,0}}+c{{\tau }_{{{\ell }_{a}}}}=c{{\tau }_{o,0}}+d_{{{\ell }_{a}}}^{\left( t,a \right)}+d_{{{\ell }_{a}}}^{\left( r,a \right)}$.
We can also further derive the following relationships using the parameters of anchor points:
\begin{subequations}\label{geometric}
\begin{align}
  & {\tilde{R}}_{{{\ell }_{a}}}^{\left( a \right)}-{\tilde{R}}_{0}^{\left( a \right)}= \left( d_{{{\ell }_{a}}}^{\left( t,a \right)}+d_{{{\ell }_{a}}}^{\left( r,a \right)} \right)-\left( d_{0}^{\left( t,a \right)}+d_{0}^{\left( r,a \right)} \right), \label{geometric_R_R}\\
 & {{p}_{x}}=\cos \left( \rho -{\varphi }_{{{\ell }_{a}}}^{\left( a \right)} \right)d_{{{\ell }_{a}}}^{\left( t,a \right)}+\sin \left( \theta _{{{\ell }_{a}}}^{\left( a \right)} \right)d_{{{\ell }_{a}}}^{\left( r,a \right)}, \label{geometric_px}\\
&{{p}_{y}}=\sin \left( \rho -{\varphi }_{{{\ell }_{a}}}^{\left( a \right)} \right)d_{{{\ell }_{a}}}^{\left( t,a \right)}+\cos \left( \theta _{{{\ell }_{a}}}^{\left( a \right)} \right)d_{{{\ell }_{a}}}^{\left( r,a \right)},\label{geometric_py}
\end{align}
\end{subequations}
where
$\theta _{{{\ell }_{a}}}^{\left( a \right)}$ denotes the AOA of the $\ell_a$-th anchor point,
and $d_{{{\ell }_{a}}}^{\left( r,a \right)}$ and $\theta _{{{\ell }_{a}}}^{\left( a \right)}$ are precisely known based on A2.
The index in (\ref{geometric_R_R}) is $\ell_a = 1, \cdots, L_{a} - 1$, whereas the index in (\ref{geometric_px}) and (\ref{geometric_py}) is $\ell_a = 0, \cdots, L_{a} - 1$.
It is worth noting that $\tau_{o, 0}$ is effectively eliminated in (\ref{geometric_R_R}).
We now define three types of variables based on (\ref{geometric}):
\begin{itemize}
  \item
  $g_{{{\ell }_{a}}}^{\left( R \right)}  \triangleq  \left( \tilde{R}_{{{\ell }_{a}}}^{\left( a \right)}-\tilde{R}_{0}^{\left( a \right)} \right)-\left[ \left( d_{{{\ell }_{a}}}^{\left( t,a \right)}+d_{{{\ell }_{a}}}^{\left( r,a \right)} \right)-\left( d_{0}^{\left( t,a \right)}+d_{0}^{\left( r,a \right)} \right) \right]$,
  $\ell_a = 1, \cdots, L_a - 1$;
  \item
  $g_{{{\ell }_{a}}}^{\left( {{p}_{x}} \right)}  \triangleq  {{p}_{x}}-\left[ \cos \left( \rho -\varphi _{{{\ell }_{a}}}^{\left( a \right)} \right)d_{{{\ell }_{a}}}^{\left( t,a \right)}+\sin \left( \theta _{{{\ell }_{a}}}^{\left( a \right)} \right)d_{{{\ell }_{a}}}^{\left( r,a \right)} \right]$,
  $\ell_a = 0, \cdots, L_a - 1$;
  \item
  $g_{{{\ell }_{a}}}^{\left( {{p}_{y}} \right)}  \triangleq  {{p}_{y}}-\left[ \sin \left( \rho -\varphi _{{{\ell }_{a}}}^{\left( a \right)} \right)d_{{{\ell }_{a}}}^{\left( t,a \right)}+\cos \left( \theta _{{{\ell }_{a}}}^{\left( a \right)} \right)d_{{{\ell }_{a}}}^{\left( r,a \right)} \right]$,
  $\ell_a = 0, \cdots, L_a - 1$.
\end{itemize}
We can thus obtain $3L_a - 1$ equations.
Stacking all unknown parameters into a vector
$\bm{\varsigma }=  \  \left[ {{p}_{x}},{{p}_{y}},\rho ,d_{0}^{\left( t,a \right)},\cdots ,d_{{{L}_{a}}-1}^{\left( t,a \right)} \right]^{T}\in {{\mathbb{R}}^{\left( {{L}_{a}}+3 \right)\times 1}}$, we have the following relationship:
\begin{align}\label{gxxxxxggg}
\nonumber \mathbf{g}\left( \bm{\varsigma } \right)&={{\left[ g_{1}^{\left( R \right)},\cdots ,g_{{{L}_{a}}-1}^{\left( R \right)},g_{0}^{\left( {{p}_{x}} \right)},\cdots ,g_{{{L}_{a}}-1}^{\left( {{p}_{x}} \right)},g_{0}^{\left( {{p}_{y}} \right)},\cdots ,g_{{{L}_{a}}-1}^{\left( {{p}_{y}} \right)} \right]}^{T}}\\
&={{\mathbf{0}}_{\left( 3{{L}_{a}}-1 \right)\times 1}},
\end{align}
where $\mathbf{g}\left( {{\bm{\varsigma }}} \right)\in \mathbb{R}^{\left(3L_a - 1\right)\times1}$.
This problem can be solved if the number of unknown parameters is less than the number of equations, i.e., $L_a + 3 \le 3L_a - 1$. We obtain:
\begin{align}\label{the_lowest_requirement_of_sensing}
{{L}_{a}}\ge 2.
\end{align}

Estimating $\bm{\varsigma }$ from (\ref{gxxxxxggg}) is a nonlinear problem and
there is no closed-form solution.
We exploit the non-linear least squares method to get the estimates and locate the UE.
The method is based on the iteration algorithm:
\begin{align}\label{localization_iteration}
{{\bm{\varsigma }}^{\left( q+1 \right)}}={{\bm{\varsigma }}^{\left( q \right)}}-{{\mathbf{H}}^{\dagger}}\left( {{\bm{\varsigma }}^{\left( q \right)}} \right)\mathbf{g}\left( {{\bm{\varsigma }}^{\left( q \right)}} \right),
\end{align}
where $q$ represents the iteration index;
$ \mathbf{H}\left( \bm{\varsigma } \right)\in {{\mathbb{R}}^{\left( 3{{L}_{a}}-1 \right)\times \left( {{L}_{a}}+3 \right)}}$ denotes the Jacobi matrix of $\mathbf{g}\left( {{\bm{\varsigma }}^{\left( q \right)}} \right)$ given in (\ref{Localizaiton H matrix}) at the top of the next page.
\begin{figure*}[ht]
\begin{align}\label{Localizaiton H matrix}
\mathbf{H}\left( \bm{\varsigma } \right)=\left[ \begin{matrix}
   {{\mathbf{0}}_{\left( {{L}_{a}}-1 \right)\times 2}} & {{\mathbf{0}}_{\left( {{L}_{a}}-1 \right)\times 1}} & \left[ -{{\mathbf{1}}_{\left( {{L}_{a}}-1 \right)\times 1}},{{\mathbf{I}}_{{{L}_{a}}-1}} \right]  \\
   -{{\mathbf{1}}_{{{L}_{a}}\times 1}}\otimes {{\mathbf{I}}_{2}} & \text{diag}\left\{ {{\bm{\varepsilon }}^{\left( t,a \right)}} \right\}\left( {{\mathbf{d}}^{\left( t,a \right)}}\otimes {{\left[ -1,1 \right]}^{T}} \right) & \left( {{\mathbf{I}}_{{{L}_{a}}}}\otimes {{\mathbf{T}}_{2}} \right)\text{diag}\left\{ {{\bm{\varepsilon }}^{\left( t,a \right)}} \right\}\left( {{\mathbf{I}}_{{{L}_{a}}}}\otimes {{\mathbf{1}}_{2\times 1}} \right)  \\
\end{matrix} \right].
\end{align}
\normalsize
\hrulefill
\vspace*{4pt}
\end{figure*}
In (\ref{Localizaiton H matrix}),
${{\mathbf{d}}^{\left( t,a \right)}}={{\left[ d_{0}^{\left( t,a \right)},\cdots ,d_{{{L}_{a}}-1}^{\left( t,a \right)} \right]}^{T}}\in {{\mathbb{C}}^{{{L}_{a}}\times 1}}$,
${{\mathbf{T}}_{2}}=\left[ \begin{matrix}
   0 & 1  \\
   1 & 0  \\
\end{matrix} \right] $, and
${{\bm{\varepsilon }}^{\left( t,a \right)}}=\left[ \sin \left( \rho -\varphi _{0}^{\left( a \right)} \right),\cos \left( \rho -\varphi _{0}^{\left( a \right)} \right),\cdots,\sin \left( \rho -\varphi _{{{L}_{a}}-1}^{\left( a \right)} \right)  \right.$
$  {{\left.,\cos \left( \rho -\varphi _{{{L}_{a}}-1}^{\left( a \right)} \right) \right]}^{T}}\in {{\mathbb{C}}^{2{{L}_{a}}\times 1}}$.

The absolute location of UE is thus estimated as:
\begin{align}\label{UE and dynamic target position}
\left( {{{\hat{p}}}_{x}},{{{\hat{p}}}_{y}} \right)=\left( {{\left[ {\hat{\bm{\varsigma }}} \right]}_{0}},{{\left[ {\hat{\bm{\varsigma }}} \right]}_{1}} \right).
\end{align}
Then, we estimate the reference TMO, i.e., $\tau_{o, 0}$, based on the relationship
$\tilde{R}_{{{\ell }_{a}}}^{\left( a \right)}=c{{\tau }_{o,0}}+d_{{{\ell }_{a}}}^{\left( t,a \right)}+d_{{{\ell }_{a}}}^{\left( r,a \right)}$ given above (\ref{geometric}):
\begin{align}\label{reference TMO estimate}
 \nonumber {\hat{\tau }_{o,0}}&=\frac{1}{c L_a}\sum\limits_{{{\ell }_{a}}=0}^{{{L}_{a}}-1}{\left[ \hat{\tilde{R}}_{{{\ell }_{a}}}^{\left( a \right)} \right.} \\
 & \left. -\left( {{\left\| \left( {{{\hat{p}}}_{x}},{{{\hat{p}}}_{y}} \right)-\left( p_{x,{{\ell }_{a}}}^{\left( \text{ANP} \right)},{p}_{y,{{\ell }_{a}}}^{\left( \text{ANP} \right)} \right) \right\|}_{2}}+d_{{{\ell }_{a}}}^{\left( r,a \right)} \right) \right],
\end{align}
where $\left( p_{x,{{\ell }_{a}}}^{\left( \text{ANP} \right)},{p}_{y,{{\ell }_{a}}}^{\left( \text{ANP} \right)} \right)$ represents the cartesian coordinates of the $\ell_a$-th anchor point.

Finally, we are ready to estimate the absolute locations for the dynamic targets.
According to (\ref{reference TMO estimate}), the absolute delay of the $\ell_d$-th dynamic target is
${\hat{\tau }_{{{\ell }_{d}}}}=\frac{1}{c}\hat{\tilde{R}}_{{{\ell }_{d}}}^{\left( d \right)}-{{\hat{\tau }}_{o,0}}$.
Let $d_{{{\ell }_{d}}}^{\left( t,d \right)}$ and $d_{{{\ell }_{d}}}^{\left( r,d \right)}$ denote
the range between UE and dynamic target and the range between dynamic target and BS, respectively.
We obtain $c{{\tau }_{{{\ell }_{d}}}} = d_{{{\ell }_{d}}}^{\left( t,d \right)} + d_{{{\ell }_{d}}}^{\left( r,d \right)}$.
Using ${\hat{\tau }_{{{\ell }_{d}}}}$, $\left( {{{\hat{p}}}_{x}},{{{\hat{p}}}_{y}} \right)$ and $\hat{\theta}_{\ell_d}^{\left(d\right)}$,
the following equation with respect to the $\ell_d$-th dynamic target holds:
\begin{align}\label{equations for dynamic target localization}
c{{{\hat{\tau }}}_{{{\ell }_{d}}}}={{\left\| \left( d_{{{\ell }_{d}}}^{\left( r,d \right)}\sin \hat{\theta }_{{{\ell }_{d}}}^{\left( d \right)},d_{{{\ell }_{d}}}^{\left( r,d \right)}\cos \hat{\theta }_{{{\ell }_{d}}}^{\left( d \right)} \right)-\left( {{{\hat{p}}}_{x}},{{{\hat{p}}}_{y}} \right) \right\|}_{2}} +d_{{{\ell }_{d}}}^{\left( r,d \right)}.
\end{align}
We can obtain $\hat{d}_{{{\ell }_{d}}}^{\left( r,d \right)}$ from (\ref{equations for dynamic target localization}).
The absolute location of the $\ell_d$-th dynamic target is:
\begin{align}\label{dynamic location}
\left( \hat{x}_{{{\ell }_{d}}}^{\left( d \right)},\hat{y}_{{{\ell }_{d}}}^{\left( d \right)} \right)=\left( \hat{d}_{{{\ell }_{d}}}^{\left( r,d \right)}\sin \hat{\theta }_{{{\ell }_{d}}}^{\left( d \right)},\hat{d}_{{{\ell }_{d}}}^{\left( r,d \right)}\cos \hat{\theta }_{{{\ell }_{d}}}^{\left( d \right)} \right).
\end{align}
We summarize the major steps of the UE and dynamic target localization algorithm in Algorithm \ref{Algorithm2}.
Note that $g\left(\bm{\varsigma}\right)$ equals zero when $\bm{\varsigma}$ is the true value in a noise-free environment. Hence, our method is unbiased.

\begin{algorithm}[t!]
\caption{Proposed UE and Dynamic Target Localization Algorithm}\label{Algorithm2}
\hspace*{0.02in}{\bf Input:}
The estimates: $\hat{\tilde{R}}_{{{\ell }_{d}}}^{\left( d \right)}$, $\hat{\theta }_{{{\ell }_{d}}}^{\left( d \right)}$, $\hat{\tilde{R}}_{{{\ell }_{a}}}^{\left( a \right)}$ and $\hat{\varphi }_{{{\ell }_{a}}}^{\left( a \right)}$,
where $\ell_d = 0, \cdots, L_d - 1$ and $\ell_a = 0, \cdots, L_a - 1$.
The known position of anchor points: $\left( p_{x,{{\ell }_{a}}}^{\left( \text{ANP} \right)},{p}_{y,{{\ell }_{a}}}^{\left( \text{ANP} \right)} \right)$.\\
\hspace*{0.02in}{\bf Output:}
$\left( {{{\hat{p}}}_{x}},{{{\hat{p}}}_{y}} \right)$,
$\left( \hat{x}_{{{\ell }_{d}}}^{\left( d \right)},\hat{y}_{{{\ell }_{d}}}^{\left( d \right)} \right)$,
where $\ell_d = 0, \cdots, L_d - 1$.

\begin{algorithmic}[1]
\STATE Initialization: ${{\bm{\varsigma }}^{\left( 0 \right)}}={{\left[ 0,0,0,\hat{\tilde{R}}_{0}^{\left( a \right)},\cdots ,\hat{\tilde{R}}_{{{L}_{a}}-1}^{\left( a \right)} \right]}^{T}}$.
\WHILE {Unconvergence}
\STATE Update $\mathbf{g}\left( {{\bm{\varsigma }}^{\left( q \right)}} \right)$ and the Jacobi matrix  $\mathbf{H}\left( \bm{\varsigma }^{\left( q \right)} \right)$.
\STATE Calculate ${{\bm{\varsigma }}^{\left( q+1 \right)}}$ by (\ref{localization_iteration}). $q = q + 1$.
\ENDWHILE
\STATE Obtain $\left( {{{\hat{p}}}_{x}},{{{\hat{p}}}_{y}} \right)$ by (\ref{UE and dynamic target position}).
\STATE Calculate $\hat{d}_{{{\ell }_{d}}}^{\left( r,d \right)}$ by (\ref{reference TMO estimate}) (\ref{equations for dynamic target localization}).
Obtain $\left( \hat{x}_{{{\ell }_{d}}}^{\left( d \right)},\hat{y}_{{{\ell }_{d}}}^{\left( d \right)} \right)$ by (\ref{dynamic location}).
\end{algorithmic}
\end{algorithm}

\subsection{Feasibility of UE Localization}\label{IV-C}

In this subsection, we evaluate the impact of noise on the localization.
Similar to the analysis in (\ref{Appen_formu_2}), the UE localization accuracy is described by first-order approximation:
\begin{align}\label{1st_in_vc}
\Delta \bm{\varsigma }\approx \frac{\partial \bm{\varsigma }}{\partial \mathbf{e}_\text{all}^{T}}\Delta {{\mathbf{e}}_\text{all}}=-{{\mathbf{H}}^{\dagger }}\left( \bm{\varsigma } \right)\frac{\partial \mathbf{g}\left( \bm{\varsigma } \right)}{\partial \mathbf{e}_\text{all}^{T}}\Delta {{\mathbf{e}}_\text{all}},
\end{align}
where
${{\mathbf{e}}_\text{all}}={{\left[ \mathbf{R}_{a}^{T},\bm{\varphi }_{a}^{T} \right]}^{T}}\in {{\mathbb{R}}^{ 2{{L}_{a}} \times 1}}$
denotes the parameters used for localization;
$\Delta {{\mathbf{e}}_\text{all}}$ and $\Delta \bm{\varsigma } \in {{\mathbb{R}}^{\left({{L}_{a}}+3 \right)\times 1}}$ represent the estimation error vector and the localization error vector, respectively.
We provide the following proposition to illustrate the situation that needs to be averted.

\begin{proposition}\label{propo_angle_and_something}
The localization errors for the UE are infinite if the UE and all anchor points are on the same line.
\end{proposition}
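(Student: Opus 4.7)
My plan is to prove Proposition 4 by showing that, under the collinearity hypothesis, the Jacobian $\mathbf{H}(\bm{\varsigma})$ in (\ref{Localizaiton H matrix}) loses column rank, so that $\mathbf{H}^T\mathbf{H}$ becomes singular. Consequently, the pseudo-inverse $\mathbf{H}^\dagger=(\mathbf{H}^T\mathbf{H})^{-1}\mathbf{H}^T$ appearing in the first-order perturbation formula (\ref{1st_in_vc}) is ill-defined, and the induced localization error $\Delta\bm{\varsigma}$ grows without bound along the unidentifiable direction.

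The first step is to translate the collinearity hypothesis into an algebraic condition on $\bm{\varsigma}$. From (\ref{geometric_px})--(\ref{geometric_py}), the global-frame vector from anchor $\ell_a$ to the UE has direction angle $\rho-\varphi_{\ell_a}^{(a)}$. Hence the UE and all $L_a$ anchor points lying on a single line is exactly equivalent to the existence of a constant $\alpha$ with $\rho-\varphi_{\ell_a}^{(a)}=\alpha$ for every $\ell_a=0,\cdots,L_a-1$. Under this condition, the vector $\bm{\varepsilon}^{(t,a)}$ in (\ref{Localizaiton H matrix}) collapses to $\mathbf{1}_{L_a\times 1}\otimes[\sin\alpha,\cos\alpha]^T$, which is what drives the rank drop.

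Next, I would exhibit an explicit non-zero vector in $\ker\mathbf{H}$, namely
\[
\mathbf{v}\;=\;\bigl[\cos\alpha,\;\sin\alpha,\;0,\;1,\;1,\;\cdots,\;1\bigr]^T\in\mathbb{R}^{L_a+3},
\]
and verify $\mathbf{H}\mathbf{v}=\mathbf{0}$ block by block. For the top $g^{(R)}$ rows, the common unit perturbation applied to every $d_{\ell_a}^{(t,a)}$ cancels out of every range difference. For each $g^{(p_x)}$ row, the $p_x$-column contribution $\cos\alpha$ and the $\ell_a$-th $d$-column contribution $-\cos\alpha$ cancel, while the $\rho$-column contributes nothing because its $\mathbf{v}$-coordinate is zero; the $g^{(p_y)}$ block is symmetric. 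Geometrically, $\mathbf{v}$ encodes the physical ambiguity of sliding the UE along the common anchor line by a unit distance while incrementing every transmit-to-anchor range by the same amount, an operation that leaves every range-difference and every AOD invariant and is therefore unresolvable from the measurements used by Algorithm \ref{Algorithm2}.

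The main subtlety I anticipate is to justify that the $\rho$-coordinate of any null vector must indeed vanish, so that no independent null directions are overlooked. This follows by subtracting the $g^{(p_x)}$ equation (and similarly $g^{(p_y)}$) across two distinct anchor indices: the $p_x$ and $d$-column contributions drop out, leaving $\sin\alpha\,(d_{\ell_a}^{(t,a)}-d_{\ell_{a'}}^{(t,a)})\Delta\rho=0$ and $\cos\alpha\,(d_{\ell_a}^{(t,a)}-d_{\ell_{a'}}^{(t,a)})\Delta\rho=0$. Since $\sin\alpha$ and $\cos\alpha$ cannot both vanish and, for distinct anchor locations, the transmit-to-anchor distances differ, one concludes $\Delta\rho=0$. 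This pins down the kernel as precisely the one-dimensional slide-along-the-line direction and, together with the block-wise verification, establishes the singularity of $\mathbf{H}^T\mathbf{H}$ and hence the proposition.
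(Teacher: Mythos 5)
Your proposal is correct and follows essentially the same route as the paper: collinearity forces all relative AODs $\rho-\varphi_{\ell_a}^{(a)}$ to coincide, which renders $\mathbf{H}^{T}\left( \bm{\varsigma } \right)\mathbf{H}\left( \bm{\varsigma } \right)$ singular, so the pseudo-inverse in the first-order error expression is unbounded. The paper merely asserts this singularity as ``easy to verify,'' whereas you actually supply the verification by exhibiting the explicit null vector $\left[ \cos \alpha ,\sin \alpha ,0,1,\cdots ,1 \right]^{T}$ (the slide-along-the-line ambiguity), which checks out block by block against the Jacobian structure; the additional argument that the kernel is exactly one-dimensional is not needed for the conclusion but does no harm.
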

\begin{proof}
When UE and all anchor points are on the same line, the AOD of anchor points is equal,
i.e.,
$\varphi \triangleq \varphi_{0}^{\left(a\right)} = \cdots = \varphi_{L_a - 1}^{\left(a\right)}$.
In this case, however, it is easy to verify that $ \left| {{\mathbf{H}}^{T}}\left( \bm{\varsigma } \right)\mathbf{H}\left( \bm{\varsigma } \right) \right|=0$,
which means ${{\mathbf{H}}^{\dagger }}$ is infinite.
Thus, the proposed algorithm fails while $\varphi_{0}^{\left(a\right)} = \cdots = \varphi_{L_a - 1}^{\left(a\right)}$.
\end{proof}

Essentially, the data is correlated when UE and all anchor points are on the same line, reducing the number of non-correlated equations, so our algorithm fails to locate the UE.
The solution to this problem is to employ more anchor points and avoid any three anchor points collinearly.

\section{Simulation Results}\label{V}
In this section, simulation results are presented to validate the proposed algorithms.
We first describe the system setup, then evaluate the sensing performance.

\subsection{System Setting}\label{V-A}
The carrier frequency is set to $3\text{GHz}$, the wavelength is $\lambda = 0.1\text{m}$, the subcarrier spacing of OFDM signal is $\Delta f = 480 \text{kHz}$, and the number of subcarrier is $N = 256$.
Therefore, the bandwidth is $B = N \Delta f = 122.88\text{MHz}$, the maximum unambiguous range is ${{R}_{\max }}={c}/{\Delta f}\;=625\text{m}$.
The interval between OFDM symbols is ${{T}_{0}}=62.5\text{ }\!\!\mu\!\!\text{ s}$,
and there are $K={10\text{ms}}/{{{T}_{0}}}\;=160$ OFDM symbols used for sensing.
The noise power is $\sigma_n^{2} = k_B F_n T_{st} B = 4.92\times10^{-12}\text{W}$, where $k_B$ is the Boltamann's constant, $F_n = 10$ is the receiver noise figure, and $T_{st} = 290\text{K}$ is the standard temperature.

The sizes of antenna arrays of BS and UE are $M_r = 8$ and $M_t = 4$, respectively, and the antenna interval is set to be half-wavelength.
Suppose that the positions of the static BS and UE are $\left( 0\text{m}, 0\text{m} \right)$ and $\left( 60\text{m}, 40\text{m} \right)$, respectively.
The angle between the UE array and $y$-axis, as depicted in Fig. \ref{Fig2}, is $\rho =20{}^\circ $.
\begin{figure}[!t]
\centering
\includegraphics[width=3.2in]{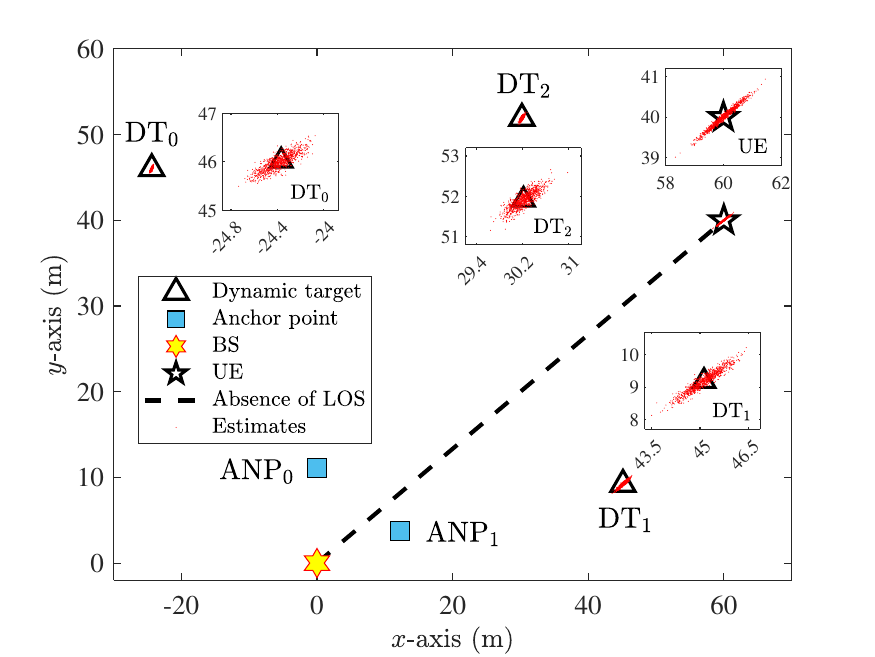}
\caption{Position of dynamic targets, BS, UE, and the localization results.}
\label{Fig_simu}
\end{figure}
Moreover, three dynamic targets are distributed in the space, and the positions are $\text{DT}_0 = \left( -24.4\text{m}, 46.0\text{m} \right)$, $\text{DT}_1 = \left( 45.1\text{m}, 9.2\text{m} \right)$, and $\text{DT}_2 = \left( 30.2\text{m}, 51.9\text{m} \right)$.
The complex-valued reflection coefficient is set based on (\ref{formula_original}).
Assume that the reflecting factor for dynamic target and the transmitted power are $\sigma_{x} = 1\text{m}^2$ and $0.1\text{W}$, respectively.

In addition, $L_s = 15$ static objects are assumed to be randomly distributed in space.
Based on the definition in Section \ref{Performance}, the averaged signal-to-noise ratio for static objects is defined as:
\begin{align}\label{SNR_for_static_target}
\text{SN}{{\text{R}}_{S}}=\frac{{\mathcal{P}_{s}}}{\sigma _{n}^{2}}=\frac{\sum\limits_{n=0}^{N-1}{{{\left| h_{s,n}^{\left( m \right)} \right|}^{2}}}}{N\sigma _{n}^{2}}.
\end{align}
Moreover, two anchor points are considered in this section, with positions of $\text{ANP}_0 = \left( 0\text{m}, 11.1\text{m} \right)$ and $\text{ANP}_1 = \left( 12.3\text{m}, 3.8\text{m} \right)$, as shown in Fig. \ref{Fig_simu}.
Assume that the reflecting factor for each anchor point is $\sigma_{x} = 2\text{m}^2$ \cite{RadarHandbook}.
In this case, the signal power of the static objects is 11dB higher than that of the dynamic targets.

We use large CFO and TMO values in this section to verify Remark \ref{independent_of_RTMO_RCFO}.
The unambiguous TMO and CFO data is randomly distributed within
$\left[ \frac{0\text{m}}{c},\frac{625\text{m}}{c} \right]$
and
$\left[ -\pi ,\pi  \right]$.
In particular, we set $c{{\tau }_{o,0}} = 205.6\text{m}$.

\subsection{Sensing Performance: Evaluation for Algorithm \ref{Algorithm1}}\label{V-B}

\begin{figure}[!t]
\centering
\subfigure[]
{\includegraphics[width=3.2in]{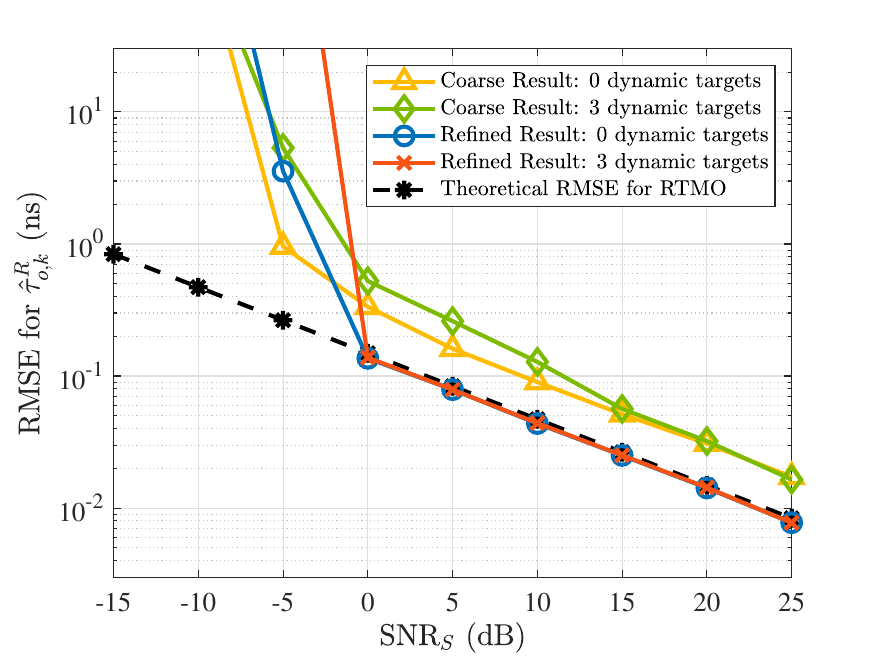}}
\subfigure[]
{\includegraphics[width=3.2in]{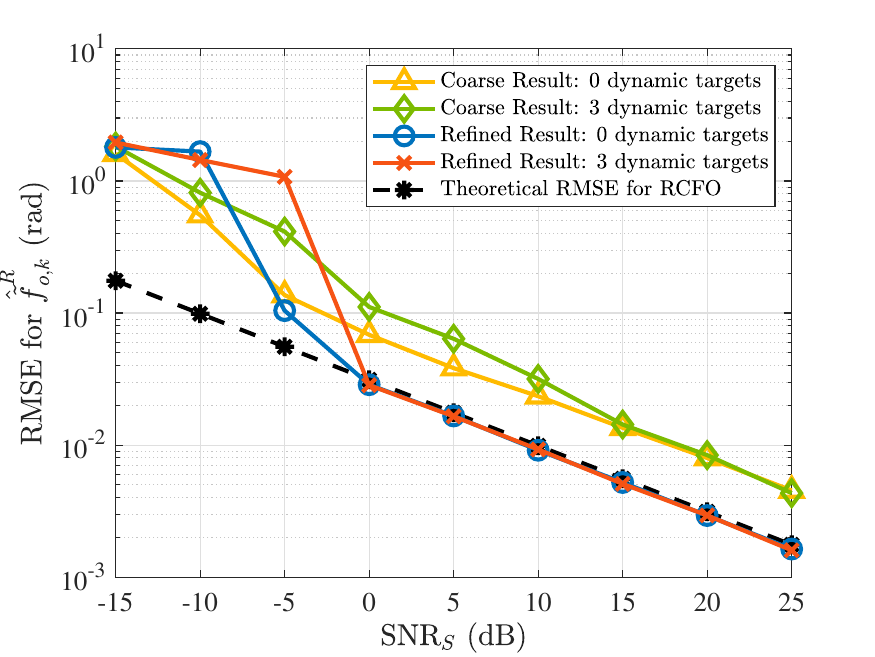}}
\caption{Results of Algorithm \ref{Algorithm1}: the RMSEs of (a) $\hat{\tau}^{R}_{o, k}$ and (b) $\hat{\tilde{f}}^{R}_{o,k}$ under different $\text{SNR}_S$ averaged over 1000 trials for each SNR$_S$ value.}
\label{Fig_simu1_2}
\end{figure}

First, we evaluate Algorithm \ref{Algorithm1} with $M = 4$.
Fig. \ref{Fig_simu1_2} illustrates the RMSE of $\hat{\tau}^{R}_{o, k}$ and $\hat{\tilde{f}}^{R}_{o,k}$ versus $\text{SNR}_S$.
In both figures, the refined estimation achieves a lower RMSE than the coarse estimation.
In coarse estimation, the RMSE of `3 dynamic targets' is higher than that of `0 dynamic targets' when SNR$_S < 15$dB.
In refined estimation, the RMSE is independent of the number of dynamic targets, which demonstrates the high efficacy of the refined estimator.
In addition, it is worth noting that the RMSE of the refined estimate declines faster than the coarse estimate with the increment of $\text{SNR}_S$.
This is because the noise in coarse estimation is no longer Gaussian distribution when the SNR is sufficiently high, referring to (\ref{4})-(\ref{MLE_fomular_RTMO_RCFO}).
Moreover, it can be observed from Fig. \ref{Fig_simu1_2} that the proposed Algorithm \ref{Algorithm1} works when $\text{SNR}_S\ge 0\text{dB}$.
This indicates that our algorithm is still effective when the signal power for static objects is only about 10dB higher than that for dynamic targets.
Finally, the RMSE of the refined estimate approaches to the theoretical RMSE provided in Proposition \ref{propo_var} for both
$\hat{\tau}^{R}_{o, k}$ and $\hat{\tilde{f}}^{R}_{o,k}$,
validating the theoretical results.

\begin{figure}[!t]
\centering
\subfigure[]
{\includegraphics[width=3.2in]{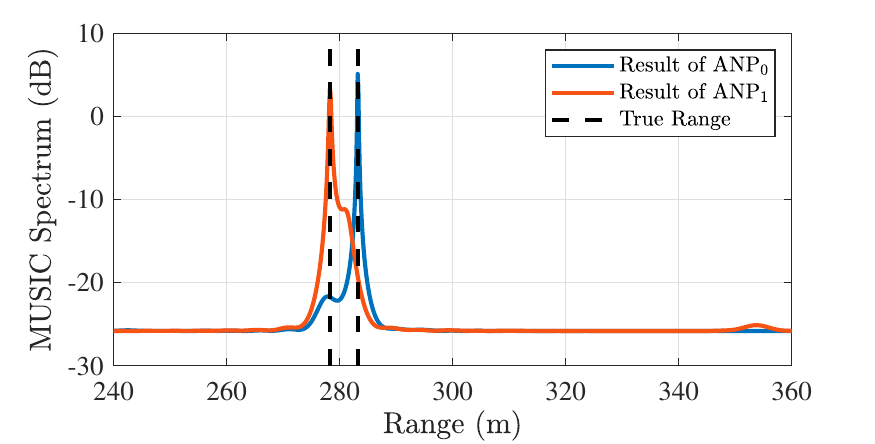}}
\subfigure[]
{\includegraphics[width=3.2in]{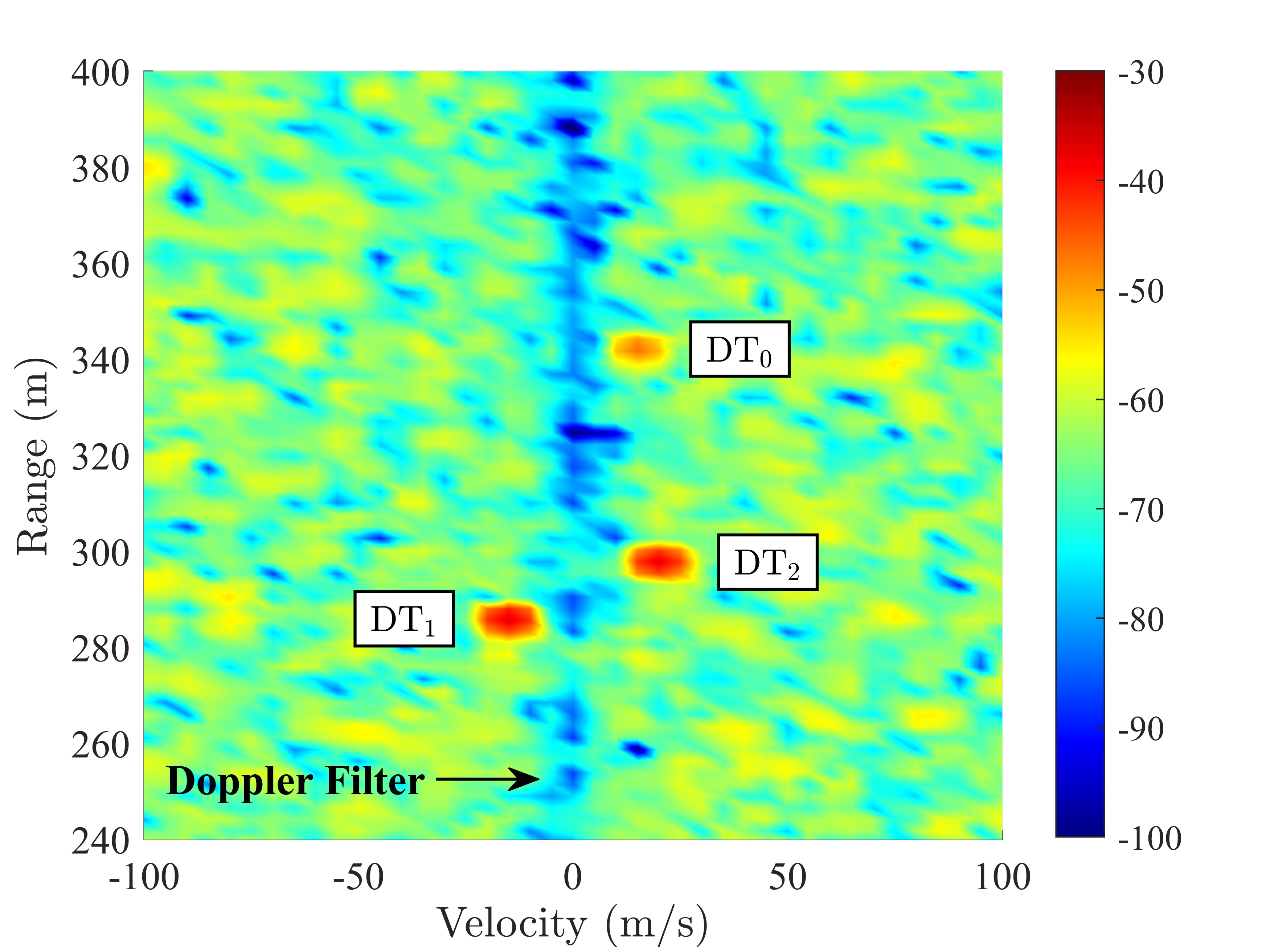}}
\caption{Results of dynamic target and anchor point estimation after RTMO and RCFO compensation, with
$c{{\tau }_{o,0}} = 205.6\text{m}$, exemplified by $\text{SN}{{\text{R}}_{S}}=4.13\text{dB}$.
(a) Result of identifying anchor points in range MUSIC spectrum; (b) Result after compensating and Doppler-domain filtering.}
\label{Fig_simu_2_RD_Spectrum}
\end{figure}

\subsection{Sensing Performance: Evaluation for Anchor Point Identification and Parameter Estimation}

We first present the results for anchor point identification.
The SNR of static objects is set to SNR$_S = 4.13\text{dB}$.
We obtain the $f_D = 0$ data from $\tilde{\tilde{y}}_{n,k}^{\left( m \right)}$ (\ref{second_compensate}) at each receiving antenna, and perform the spatial-smoothing-based MUSIC algorithm to calculate the noise subspace about the range and AOA.

Utilizing the known AOA of two anchor points, we obtain two range MUSIC spectrums with respect to the two AOAs,
shown in Fig. \ref{Fig_simu_2_RD_Spectrum}(a).
As can be seen, in addition to the anchor points, there are other peaks in the range MUSIC spectrum, which belong to the static objects.
According to A2, the power of the anchor points is higher than that of the static objects,
so the SNR of the anchor points' signal is higher.
When estimating the range using the MUSIC algorithm, a higher SNR will produce better orthogonality, leading to a higher peak in MUSIC spectrum.
Thus, we select the peaks with the maximum amplitude in Fig. \ref{Fig_simu_2_RD_Spectrum}(a) as the anchor points,
as shown in Fig. \ref{Fig_simu_2_RD_Spectrum}(a).
After identifying through MUSIC spectrum, we estimate the range and AOD of each anchor point, and the results are given in Table \ref{table1}.
The AOD estimates are unbiased, while the range estimates are offset by the reference TMO, i.e., $c{{\tau }_{o,0}} = 205.6\text{m}$.

Next, we remove static signals using a Doppler filter with transfer function $H\left( z \right)=\frac{1-0.9{{z}^{-1}}}{1-{{z}^{-1}}}$, as described in Section \ref{IV-A},
and the range-Doppler spectrum result is shown in Fig. \ref{Fig_simu_2_RD_Spectrum}(b).
It can be observed that the signals with
zero Doppler are entirely eliminated,
and the dynamic targets are all retained.
Then, we estimate the range and AOA of dynamic targets using the DFT-based algorithm, and the bias and RMSE results are given in Table \ref{table1}, averaging over 1000 trials.
Note that the dynamic targets and anchor points have the same biased values of range, i.e., $c{{\tau }_{o,0}} = 205.6\text{m}$.
This is used as the basis for removing $c{{\tau }_{o,0}}$ using Algorithm \ref{Algorithm2}.
In addition, the deviation of the estimated AOA and AOD is less than $10^{-4}$ degree, and the deviation of unbiased range estimate (without considering the reference TMO $c{{\tau }_{o,0}} = 205.6\text{m}$) is less than $1\text{mm}$.

Fig. \ref{Fig_simu_compare} compares the proposed algorithm and CACC
in terms of the estimation accuracy of
the velocity and relative range.
The term $\text{SNR}_D$ denotes the power ratio between dynamic path and noise.
The simulation setting is based on Section \ref{V-A}.
Since there is no LOS between the UE and BS, we cannot obtain the absolute range,
so we calculate the RMSE of the relative range for both algorithms.
The results show that the velocity and the relative range estimates of the proposed method are more accurate than that of CACC,
which proves the effectiveness of the proposed Algorithm \ref{Algorithm1}.

\begin{figure}[!t]
\centering
\includegraphics[width=3.2in]{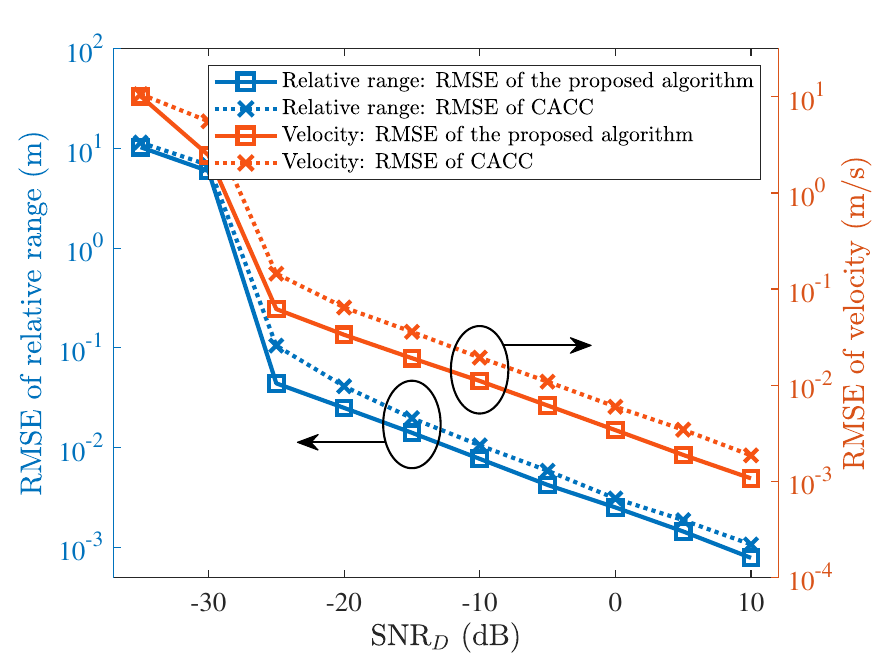}
\caption{Results of RMSE comparison between the proposed algorithm and CACC.}
\label{Fig_simu_compare}
\end{figure}

\subsection{Sensing Performance: Evaluation for Algorithm \ref{Algorithm2}}

Next, Algorithm \ref{Algorithm2} is evaluated with the assumption that all the anchor points are successfully identified in the range-AOA spectrum.
For the first evaluation, we use the output of Algorithm \ref{Algorithm1} (given in Table \ref{table1}) as the input of Algorithm \ref{Algorithm2} to locate UE and dynamic targets.
The results are plots in Fig. \ref{Fig_simu}, averaged over 1000 trials.
Clearly, all the estimate points fell precisely on the correspondence UE and dynamic targets in Fig. \ref{Fig_simu}.
These results clearly demonstrate that the proposed model in Section \ref{IV} is able to effectively overcome the impact of ${{\tau }_{o,0}}$ as well as the absence of LOS.

\begin{figure}[!t]
\centering
{\includegraphics[width=3.2in]{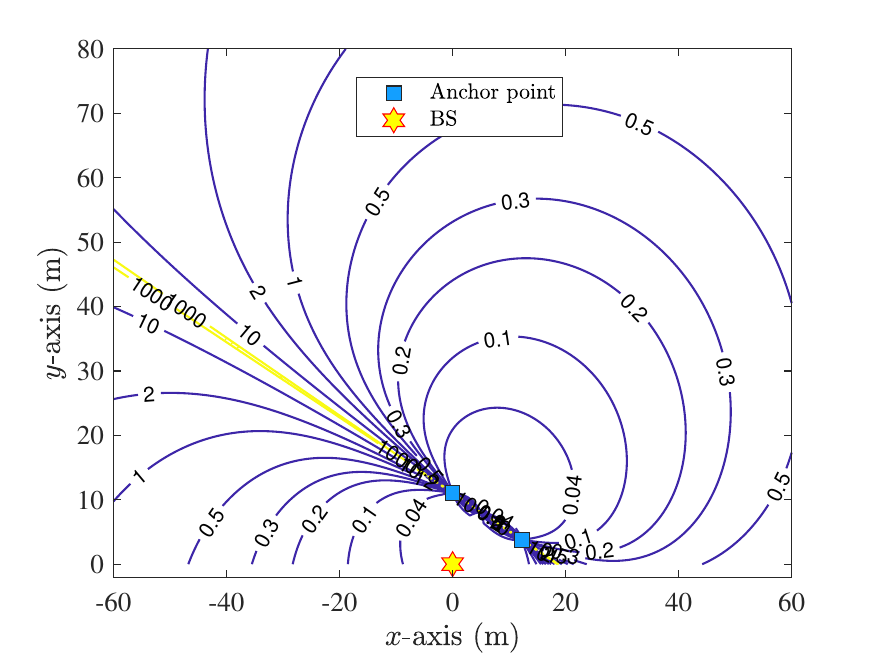}}
\caption{Localization accuracy of the UE at different locations using two anchor points.
Parameters are provided in Table \ref{table1} and the result is given in meters.}
\label{Fig_counter_map}
\end{figure}

To evaluate the impact of UE's location on the localization accuracy, we vary its locations and present the results in Fig. \ref{Fig_counter_map}, using two anchor points.
Note that the localization accuracy for the UE is defined as $\sqrt{\operatorname{var}\left\{ {{{\hat{p}}}_{x}} \right\}+\operatorname{var}\left\{ {{{\hat{p}}}_{y}} \right\}}$.
Clearly, the localization accuracy is infinite once the location of UE is collinear with two anchor points,
which validates Proposition \ref{propo_angle_and_something}.
Employing a third anchor point is able to solve this problem.
Moreover, the farther away from the anchor points, the lower the localization accuracy.
In practical applications, the accuracy of the area of interest can be improved by selecting the anchor points with optimal positions.

\begin{table}[]
\caption{Estimation Results of Dynamic Targets and Anchor Points}\label{table1}
\centering
\begin{tabular}{cccccc}
\toprule
\toprule
                           & ANP$_0$      &  ANP$_1$           & DT$_0$       & DT$_1$        & DT$_2$ \\ \midrule
\makecell[c]{Bistatic range (bias, mm)\\ (Add $c{{\tau }_{o,0}}$)}         & $<1$        &$<1$      & $<1$         & $<1$          & $<1$         \\
AOD (bias, degree)      & $<10^\text{-4}$ & $<10^\text{-4}$   & -----        & -----         & -----     \\
AOA (bias, degree)       & -----       & -----     & $<10^\text{-4}$  & $<10^\text{-4}$   &  $<10^\text{-4}$         \\
Bistatic range (RMSE, mm) & 5.26        & 5.58         & 27.99        & 9.96          & 12.03      \\
AOD (RMSE, degree)     & 0.06       & 0.06       & -----        & -----         & -----        \\
AOA (RMSE, degree)     & -----       & -----        & 0.19         & 0.29         &  0.09       \\
\bottomrule
\end{tabular}
\end{table}

\begin{figure}[!t]
\centering
\subfigure[]
{\includegraphics[width=3.2in]{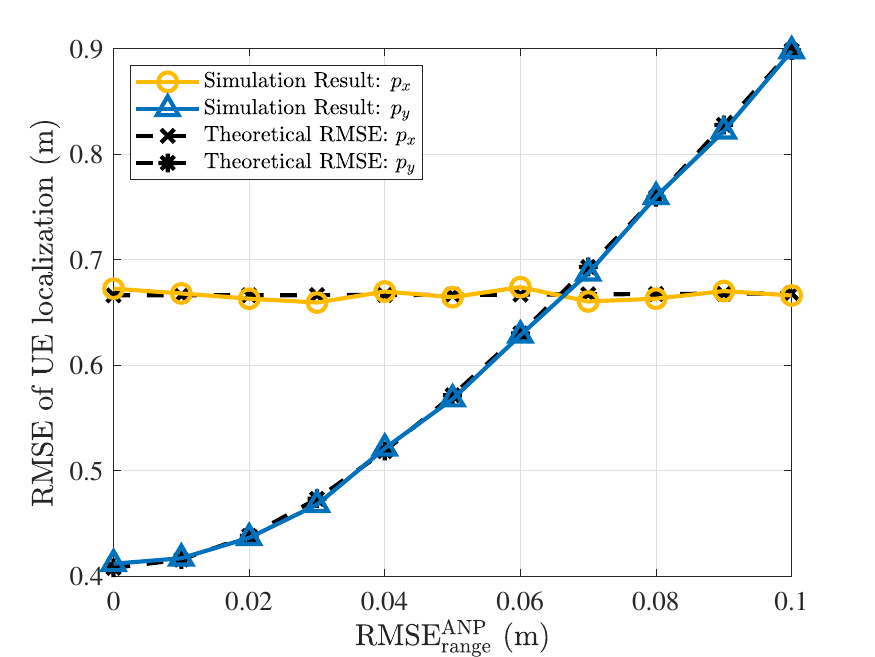}}
\subfigure[]
{\includegraphics[width=3.2in]{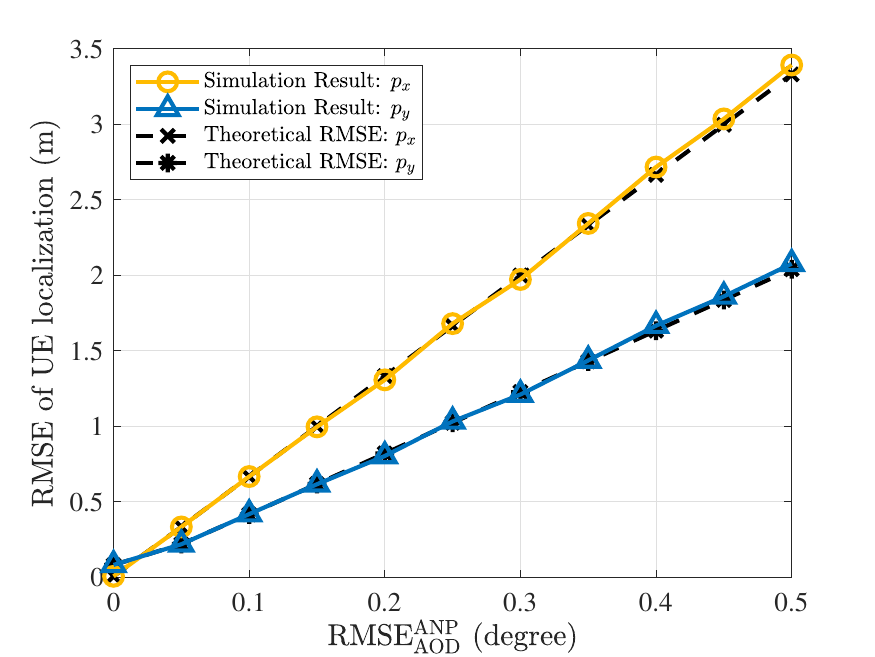}}
\caption{RMSE results of UE localization under different RMSEs of anchor point.
(a) RMSE results under different RMSE$_\text{range}^\text{ANP}$, with RMSE$_\text{AOD}^{\text{ANP}} = 0.1{}^\circ $;  (b) RMSE results under different RMSE$_\text{AOD}^\text{ANP}$, with RMSE$_\text{range}^{\text{ANP}} = 0.01\text{m}$.}
\label{Fig_simu_3_RMSE}
\end{figure}

Fig. \ref{Fig_simu_3_RMSE} shows the RMSE results of UE localization compared with theoretical RMSEs.
The theoretical RMSEs are given by (\ref{1st_in_vc}).
In particular, all the estimation results are unbiased.
Both figures show that the simulated results match the theoretical RMSEs well under different
RMSE$_\text{ range}^\text{ ANP}$ and RMSE$_\text{ AOD}^\text{ ANP}$.
Compared with Fig. \ref{Fig_simu_3_RMSE}(b), it can also be seen that
the localization RMSE is more sensitive to the RMSE$_\text{ AOD}^\text{ ANP}$ values of anchor points.
Interestingly, the UE localization RMSE is nearly zero when RMSE$_\text{ AOD}^\text{ ANP}$ equals zero,
which means a more accurate AOD estimation for anchor points leads to more precise UE localization.

\begin{figure}[!t]
\centering
{\includegraphics[width=3.2in]{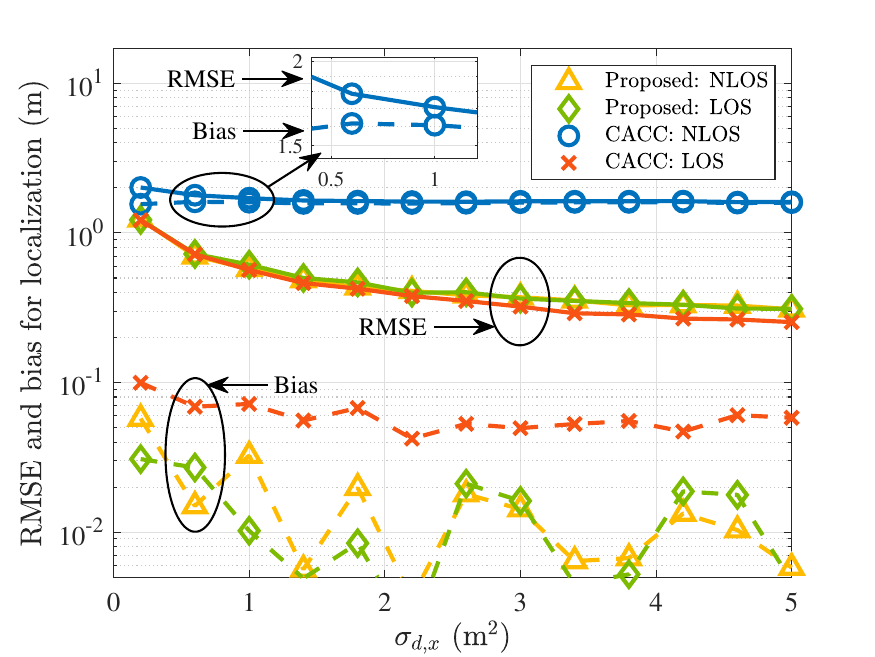}}
\caption{Localization performance of CACC and the proposed scheme.}
\label{Fig_simu_compare_all_algorithm}
\end{figure}

Fig. \ref{Fig_simu_compare_all_algorithm}
compares the localization performance between CACC and the proposed algorithms (cascading Algorithm 1 and Algorithm 2) for different
reflecting factors of DT1.
The simulation setting is described in Section \ref{V-A}.
For the LOS scenario, we added a LOS path signal into the echo and the power attenuation follows ${\frac{{{\lambda }^{2}}}{{{\left( 4\pi {{R}_{\text{LOS}}} \right)}^{2}}}}$, where ${{R}_{\text{LOS}}}$ denotes the LOS path range.
Fig. \ref{Fig_simu_compare_all_algorithm} shows that CACC and the proposed scheme achieve similar RMSEs and are both unbiased in the LOS case.
Thus, our method achieves nearly the same localization accuracy
without requiring to know the position of the UE.
In NLOS scenario, the proposed scheme still has the same accuracy as in LOS scenario and is also unbiased.
However, the CACC method fails with significant bias in the NLOS scenario since no LOS path serves as the reference path for localization.

\section{Conclusion}\label{VI_conclusion}
We have presented a novel anchor-point assisted uplink sensing scheme for ISAC systems with clock asynchronism between transmitter and receiver.
Our scheme can be applied to more general scenarios where the transmitter location is unknown to the receiver and there is no dominating LOS path between them.
The scheme consists of two main algorithms.
Algorithm \ref{Algorithm1} estimates RTMO and RCFO with respect to those at a reference snapshot.
Algorithm \ref{Algorithm2} pinpoints
the anchor points from the static objects, locates the UE and then locates dynamic targets.
We analytically show that
two anchor points are generally sufficient, and three non-collinear anchor points can guarantee to work.
The simulation results
demonstrate that our proposed scheme
can effectively locate the dynamic targets and UE, without requiring the LOS path or knowing the UE position,
and is very promising for practical applications.

Using anchor points,
the proposed scheme can evidently extend the sensing field-of-view of the BS,
especially in urban areas where line-of-sight paths are often blocked by buildings and plants.
For future work,
we can optimize their deployment/selection by taking
into consideration the environment layout, to achieve a high probability of the LOS path for the cascaded link
UE-Anchor-BS due to minimal constraints on the selection or installing of anchor points.

\begin{appendices}

\section{Proof of Proposition \ref{propo_matrix}}\label{Appendix_A}
Take the RCFO as an example.
The formula ${{\mathbf{g}}_{c}}=\frac{\partial }{\partial \mathbf{{c}}_{1}^{\left( m \right)}}\mathcal{L}\left( \mathbf{{c}}_{1}^{\left( m \right)},\bm{{\kappa} }_{1}^{\left( m \right)} \right)={{\mathbf{0}}_{\left( K-1 \right)\times 1}}$
holds while (\ref{iteration_algorithm}) converges.
For ease of derivation, we define
$\mathbf{w}_{c}^{\left( m \right)}=\exp \left( j\mathbf{{c}}_{1}^{\left( m \right)} \right) \in \mathbb{C}^{\left(K - 1\right) \times 1}$,
$\mathbf{w}_{\kappa }^{\left( m \right)}=\exp \left( j\bm{{\kappa} }_{1}^{\left( m \right)} \right) \in \mathbb{C}^{\left(K - 1\right) \times 1}$,
and
$ \mathbf{w}_{n}^{\left( m \right)}=\mathbf{w}_{c}^{\left( m \right)}\odot {{\left( \mathbf{w}_{\kappa }^{\left( m \right)} \right)}^{-n}}\in \mathbb{C}^{\left(K - 1\right) \times 1}$.
Based on A1,
each element in $\mathbf{\tilde{y}}_{s,n}^{\left( m \right)}$ is large.
The impact of $ \mathbf{{z}}_{n}^{\left( m \right)}$ on $ \mathbf{{w}}_{c}^{\left( m \right)}$
can be expressed by the first-order approximation of $ \mathbf{{w}}_{c}^{\left( m \right)}$ in the point $\mathbf{\tilde{y}}_{s,n}^{\prime\left( m \right)}$,
where
$\mathbf{{\tilde{y}}}_{s,n}^{\prime\left( m \right)}\in {{\mathbb{C}}^{\left( K-1 \right)\times 1}}$
is defined
from
$ \mathbf{\tilde{y}}_{s,n}^{\left( m \right)}={{\left[ {\tilde{y}}_{s,n,0}^{\left( m \right)},\left( \mathbf{{\tilde{y}}}_{s,n}^{\prime\left( m \right)} \right)^{T} \right]}^{T}}$:
\begin{align}\label{Appen_formu_2}
\nonumber \Delta \mathbf{{w}}_{c}^{\left( m \right)} & \approx \sum\limits_{n=0}^{N-1}{\frac{\partial \mathbf{{w}}_{c}^{ \left( m \right)}}{\partial {{\left( \mathbf{\tilde{y}}_{s,n}^{\prime\left( m \right)} \right)}^{T}}}\mathbf{{z}}_{n}^{\left( m \right)}}\\
&\approx -{{\left[ \frac{\partial {{\mathbf{g}}_{c}}}{\partial {{\left( \mathbf{{w}}_{c}^{\left( m \right)} \right)}^{T}}} \right]}^{-1}}\sum\limits_{n=0}^{N-1}{\frac{\partial {{\mathbf{g}}_{c}}}{\partial {{\left( \mathbf{\tilde{y}}_{s,n}^{\prime\left( m \right)} \right)}^{T}}}\mathbf{{z}}_{n}^{\left( m \right)}},
\end{align}
where the implicit differentiation is used.

The derivation of $\Delta \mathbf{{w}}_{\kappa }^{\left( m \right)}$ is identical to the process for deriving (\ref{Appen_formu_2}), and we can get the same result as (\ref{Appen_formu_2}).
With $\mathbf{\tilde{y}}_{s,n}^{\left( m \right)}={{\mathbf{1}}_{K\times 1}}h_{s,n}^{\left( m \right)}$,
we obtain four partial derivative results:
\begin{align}\label{Appen_formu_5}
\nonumber \frac{\partial {{\mathbf{g}}_{c}}}{\partial {{\left( \mathbf{\tilde{y}}_{s,n}^{\prime\left( m \right)} \right)}^{T}}} &= j{{\left( h_{s,n}^{\left( m \right)} \right)}^{*}}\left[ {{\mathbf{1}}_{\left( K-1 \right)\times 1}},{{\mathbf{1}}_{K-1}}-K{{\mathbf{I}}_{K-1}} \right] \\
 \nonumber \frac{\partial {{\mathbf{g}}_{\kappa }}}{\partial {{\left( \mathbf{\tilde{y}}_{s,n}^{\prime\left( m \right)} \right)}^{T}}} &=-jn{{\left( h_{s,n}^{\left( m \right)} \right)}^{*}}\left[ {{\mathbf{1}}_{\left( K-1 \right)\times 1}},{{\mathbf{1}}_{K-1}}-K{{\mathbf{I}}_{K-1}} \right]\\
\nonumber \frac{\partial {{\mathbf{g}}_{c}}}{\partial {{\left( \mathbf{{w}}_{c}^{ \left( m \right)} \right)}^{T}}} &=-j \sum\limits_{n=0}^{N-1}{{{\left| h_{s,n}^{\left( m \right)} \right|}^{2}}} \left({{\mathbf{1}}_{K-1}}-K{{\mathbf{I}}_{K-1}}\right)\\
\frac{\partial {{\mathbf{g}}_{\kappa }}}{\partial {{\left( \mathbf{{w}}_{\kappa }^{\left( m \right)} \right)}^{T}}}&=-j \sum\limits_{n=0}^{N-1}{{{n}^{2}}{{\left| h_{s,n}^{\left( m \right)} \right|}^{2}}} \left({{\mathbf{1}}_{K-1}}-K{{\mathbf{I}}_{K-1}}\right).
\end{align}
Substituting (\ref{Appen_formu_5}) into (\ref{Appen_formu_2}) and using ${\left({{\mathbf{1}}_{K-1}}-K{{\mathbf{I}}_{K-1}}\right)^{-1}}=-{{K}^{-1}}\left[ {{\mathbf{1}}_{K-1}}+{{\mathbf{I}}_{K-1}} \right]$ yields:
\begin{align}\label{Appen_formu_6}
\Delta \mathbf{{w}}_{c}^{\left( m \right)}\approx \frac{\sum\limits_{n=0}^{N-1}{\tilde{\bm{\mu }}_{n}^{\left( m \right)}}}{\sum\limits_{n=0}^{N-1}{{{\left| h_{s,n}^{\left( m \right)} \right|}^{2}}}}, \ \Delta \mathbf{{w}}_{\kappa }^{\left( m \right)}\approx \frac{-\sum\limits_{n=0}^{N-1}{n\tilde{\bm{\mu }}_{n}^{\left( m \right)}}}{\sum\limits_{n=0}^{N-1}{{{n}^{2}}{{\left| h_{s,n}^{\left( m \right)} \right|}^{2}}}},
\end{align}
where $\tilde{\bm{\mu }}_{n}^{\left( m \right)}={{\left( h_{s,n}^{\left( m \right)} \right)}^{*}}\left( \mathbf{{z}}_{n}^{\prime\left( m \right)}- \Delta \tilde{y}_{n,0}^{\left( m \right)}\mathbf{1}_{\left(K - 1\right) \times 1} \right) \in \mathbb{C}^{\left( K - 1 \right) \times 1}$.
Essentially, $z_{n,0}^{\left( m \right)}\mathbf{1}_{\left(K - 1\right) \times 1}$ can be regarded as ${{{c}}_{0}^{\left( m \right)}}$ and ${{\kappa}_{0}^{\left( m \right)}}$.
Based on Remark \ref{remark_coarse_RTMO_RCFO}, we only focus on the relative TMO and relative CFO, so that $z_{n,0}^{\left( m \right)}\mathbf{1}_{\left(K - 1\right) \times 1}$ can be omitted.
We redefine
$\bm{\mu }_{n}^{\left( m \right)} \triangleq {{\left( h_{s,n}^{\left( m \right)} \right)}^{*}} \mathbf{{z}}_{n}^{\prime\left( m \right)} \in \mathbb{C}^{\left( K - 1 \right) \times 1}$
and replace $\tilde{\bm{\mu }}_{n}^{\left( m \right)}$ with $\bm{\mu }_{n}^{\left( m \right)}$ in (\ref{Appen_formu_6}).

Next, we perform the first-order approximation with respect to phase for calculating $\Delta \mathbf{c}_{1}^{\left( m \right)}$ and $\Delta \bm{\kappa }_{1}^{\left( m \right)}$.
Note that the phase of $\mathbf{{w}}_{c}^{ \left( m \right)}$ and $\mathbf{{w}}_{\kappa }^{ \left( m \right)}$
is small after the coarse compensation.
We can obtain the following results:
\begin{align}\label{Appen_formu_7}
\nonumber \Delta \mathbf{c}_{1}^{\left( m \right)}&=\angle \left\{ \mathbf{{w}}_{c}^{\left( m \right)}+\Delta \mathbf{{w}}_{c}^{ \left( m \right)} \right\}-\angle \left\{ \mathbf{{w}}_{c}^{ \left( m \right)} \right\}  \approx \operatorname{Im}\left\{ \Delta \mathbf{{w}}_{c}^{ \left( m \right)} \right\} \\
 \Delta \bm{\kappa }_{1}^{\left( m \right)}&=\angle \left\{ \mathbf{{w}}_{\kappa }^{ \left( m \right)}+\Delta \mathbf{{w}}_{\kappa }^{ \left( m \right)} \right\}-\angle \left\{ \mathbf{{w}}_{\kappa }^{ \left( m \right)} \right\} \approx \operatorname{Im}\left\{ \Delta \mathbf{{w}}_{\kappa }^{ \left( m \right)} \right\}.
\end{align}
Substituting (\ref{Appen_formu_6}) into (\ref{Appen_formu_7}),
proposition \ref{propo_matrix} is thus proved.

\section{Proof of Proposition \ref{propo_var}}\label{Appendix_AABB}
Use $\operatorname{var}\left\{ \Delta \mathbf{c}_{1}^{\left( m \right)} \right\} $ as an example.
We define two real numbers first:  $\bar{h}_{s,n}^{\left( m \right)}=\operatorname{Re}\left\{ h_{s,n}^{\left( m \right)} \right\}$
and
$ \overset{\scriptscriptstyle\smile}{h}_{s,n}^{\left( m \right)}=\operatorname{Im}\left\{ h_{s,n}^{\left( m \right)} \right\}$.
Substituting $\bar{h}_{s,n}^{\left( m \right)}$
and
$ \overset{\scriptscriptstyle\smile}{h}_{s,n}^{\left( m \right)}$
into (\ref{propos_formula}) yields:
$ \operatorname{Im}\left\{ \bm{\mu }_{n}^{\left( m \right)} \right\} =\bar{h}_{s,n}^{\left( m \right)}\operatorname{Im}\left\{ \mathbf{z}_{n}^{\left( m \right)} \right\}-\overset{\scriptscriptstyle\smile}{h}_{s,n}^{\left( m \right)}\operatorname{Re}\left\{ \mathbf{z}_{n}^{\left( m \right)} \right\}$.
Then, $\operatorname{var}\left\{ \Delta \mathbf{c}_{1}^{\left( m \right)} \right\}$ is derived in (\ref{popopo_2_de}), given at the top of the next page.
\begin{figure*}[ht]
\begin{align}\label{popopo_2_de}
 \nonumber \operatorname{var}\left\{ \Delta \mathbf{c}_{1}^{\left( m \right)} \right\}&={{{\left( \sum\limits_{n=0}^{N-1}{{{\left| h_{s,n}^{\left( m \right)} \right|}^{2}}} \right)}^{-2}}}\sum\limits_{n=0}^{N-1}{{{\left( \bar{h}_{s,n}^{\left( m \right)} \right)}^{2}}\operatorname{var}\left\{ \operatorname{Im}\left\{ \mathbf{z}_{n}^{\prime\left( m \right)} \right\} \right\}+{{\left( \overset{\scriptscriptstyle\smile}{h}_{s,n}^{\left( m \right)} \right)}^{2}}\operatorname{var}\left\{ \operatorname{Re}\left\{ \mathbf{z}_{n}^{\prime\left( m \right)} \right\} \right\}} \\
 & =\frac{\sigma _{n}^{2}}{2}{{{\left( \sum\limits_{n=0}^{N-1}{{{\left| h_{s,n}^{\left( m \right)} \right|}^{2}}} \right)}^{-2}}}{{\mathbf{1}}_{\left( K-1 \right)\times 1}}\sum\limits_{n=0}^{N-1}{\left[ {{\left( \bar{h}_{s,n}^{\left( m \right)} \right)}^{2}}+{{\left( \overset{\scriptscriptstyle\smile}{h}_{s,n}^{\left( m \right)} \right)}^{2}} \right]}=\sigma _{n}^{2}{{{\left( \sum\limits_{n=0}^{N-1}{{{\left| h_{s,n}^{\left( m \right)} \right|}^{2}}} \right)}^{-1}}}{{\mathbf{1}}_{\left( K-1 \right)\times 1}}.
\end{align}
\normalsize
\hrulefill
\vspace*{4pt}
\end{figure*}
In the second step of (\ref{popopo_2_de}), the independence of the real and imaginary parts of the noise is exploited.
The term $\operatorname{var}\left\{ \Delta \bm{\kappa }_{1}^{\left( m \right)} \right\}$ can be similarly derived.
Proposition \ref{propo_var} is thus proved.

\section{Proof of Proposition \ref{propo_target_fft}}\label{Appendix_B}
In (\ref{Appen_formu_6}), ${{\mathcal{E}}_{\Delta }}$ can be expressed by $ \Delta w_{c,k}^{\left( m \right)}$ and $ \Delta w_{\kappa ,k}^{\left( m \right)} $,
where
$\Delta w_{c,k}^{\left( m \right)}={{\left[ \Delta \mathbf{w}_{c}^{\left( m \right)} \right]}_{k}}$
and
$\Delta w_{\kappa ,k}^{\left( m \right)}={{\left[ \Delta \mathbf{w}_{\kappa }^{\left( m \right)} \right]}_{k}}$:
\begin{align}\label{APPb_formu_one}
 \nonumber &{{\mathcal{E}}_{\Delta }} \approx \frac{1}{NK}{{\mathcal{F}}_{2}}\left\{ \left( 1+\Delta w_{c,k}^{\left( m \right)} \right){{\left( 1+\Delta w_{\kappa ,k}^{\left( m \right)} \right)}^{-n}}-1 \right\} \\
 \nonumber & \approx \frac{1}{NK}{{\mathcal{F}}_{2}}\left\{ \Delta w_{c,k}^{\left( m \right)}-n\Delta w_{\kappa ,k}^{\left( m \right)} \right\} \\
 \nonumber & =\frac{1}{NK}\left[ {{\mathcal{F}}_{\tau }}\left\{ 1 \right\}{{\mathcal{F}}_{{{f}_{D}}}}\left\{ \Delta w_{c,k}^{\left( m \right)} \right\}-{{\mathcal{F}}_{\tau }}\left\{ n \right\}{{\mathcal{F}}_{{{f}_{D}}}}\left\{ \Delta w_{\kappa ,k}^{\left( m \right)} \right\} \right] \\
 & \approx \frac{1}{K}\delta \left( \tau  \right)\left[ {{\mathcal{F}}_{{{f}_{D}}}}\left\{ \Delta w_{c,k}^{\left( m \right)} \right\}\!-\!\frac{N-1}{2}{{\mathcal{F}}_{{{f}_{D}}}}\left\{ \Delta w_{\kappa ,k}^{\left( m \right)} \right\} \right],
\end{align}
where $\delta \left( 0 \right)=1$ and $\delta \left( \tau \right)=0$, $\tau\ne 0$,
and $ {{\mathcal{F}}_{\tau }}\left\{ 1 \right\}=N\delta \left( \tau  \right)$
and
${{\mathcal{F}}_{\tau }}\left\{ n \right\}=\sum\nolimits_{n=0}^{N-1}{n{{e}^{j2\pi n\tau \Delta f}}}\approx \frac{\left( N-1 \right)N}{2}\delta \left( \tau  \right)$.
Substituting (\ref{Appen_formu_6}) into (\ref{APPb_formu_one}) yields:
\begin{align}\label{AppB_deri_two}
\nonumber & \operatorname{var}\left\{ {{\mathcal{E}}_{\Delta }} \right\} \\
\nonumber &=\frac{\delta \left( \tau  \right)}{K}\sigma _{n}^{2}\left[ \frac{1}{\sum\limits_{\tilde{n}=0}^{N-1}{{{\left| h_{s,n}^{\left( m \right)} \right|}^{2}}}}+{{\left( \frac{N-1}{2} \right)}^{2}}\frac{1}{\sum\limits_{\tilde{n}=0}^{N-1}{{{n}^{2}}{{\left| h_{s,n}^{\left( m \right)} \right|}^{2}}}} \right] \\
 & \approx \frac{\delta \left( \tau  \right)}{K}\frac{\sigma _{n}^{2}}{\sum\limits_{\tilde{n}=0}^{N-1}{{{\left| h_{s,n}^{\left( m \right)} \right|}^{2}}}}\left( 1+\frac{3}{4} \right)= \frac{1}{NK}\frac{7\sigma _{n}^{2}}{4{{\mathcal{P}}_{s}}}\delta \left( \tau \right),
\end{align}
where ${\mathcal{P}_{s}}=\frac{1}{N}\sum\nolimits_{n=0}^{N-1}{{{\left| h_{s,n}^{\left( m \right)} \right|}^{2}}}$ denotes the static object power,
and $\operatorname{var}\left\{ {{\mathcal{F}}_{{{f}_{D}}}}\left\{ z_{n,k}^{\left( m \right)} \right\} \right\}=K\sigma _{n}^{2}$.

Next, we substitute (\ref{AppB_deri_two}) into (\ref{formula_fft_influence}) and use the standard deviation to express ${{\mathcal{E}}_{\Delta }}$.
Then, ${{\mathcal{E}}_{\Delta }}\circledast {{\mathcal{F}}_{2}}\left\{ \tilde{y}_{s,n,k}^{\left( m \right)} \right\}$ is given by:
\begin{align}\label{the_influence_formula_standard_deviation}
\nonumber {{\mathcal{E}}_{\Delta }}\circledast {{\mathcal{F}}_{2}}\left\{ \tilde{y}_{s,n,k}^{\left( m \right)} \right\}&=\sqrt{\frac{1}{NK}\frac{7\sigma _{n}^{2}}{4{{\mathcal{P}}_{s}}}}\delta \left( \tau  \right)\circledast \left[ NK{{Y}_{s}}\left( \tau  \right)\delta \left( {{f}_{D}} \right) \right] \\
 &= \sqrt{\frac{1}{NK}\frac{7\sigma _{n}^{2}}{4{{\mathcal{P}}_{s}}}}NK{{Y}_{s}}\left( \tau  \right),
\end{align}
where $\delta \left( 0 \right)=1$ and $\delta \left( f_D \right)=0$, $f_D\ne 0$, and
${{\mathcal{F}}_{2}}\left\{ \tilde{y}_{s,n,k}^{\left( m \right)} \right\} = NK{{Y}_{s}}\left( \tau  \right)\delta \left( {{f}_{D}} \right)$.
Similar to (\ref{the_influence_formula_standard_deviation}), the noise term is given by ${{\mathcal{F}}_{noise}}={{\mathcal{F}}_{2}}\left\{ z_{n,k}^{\left( m \right)} \right\}=NK{{\sigma }_{n}}$.
Finally, we compute the ratio between ${{\mathcal{E}}_{\Delta }}\circledast {{\mathcal{F}}_{2}}\left\{ \tilde{y}_{s,n,k}^{\left( m \right)} \right\}$ and ${{\mathcal{F}}_{noise}}$ to prove Proposition \ref{propo_target_fft}:
\begin{align}\label{final_result_of_AppB}
\frac{{{\mathcal{E}}_{\Delta }}\circledast {{\mathcal{F}}_{2}}\left\{ \tilde{y}_{s,n,k}^{\left( m \right)} \right\}}{{{\mathcal{F}}_{noise}}}=\sqrt{\frac{7}{NK}}\frac{{{Y}_{s}}\left( \tau  \right)}{2\sqrt{{{\mathcal{P}}_{s}}}} \ll 1.
\end{align}
Note that ${{\mathcal{E}}_{\Delta }}\circledast {{\mathcal{F}}_{2}}\left\{ \tilde{y}_{s,n,k}^{\left( m \right)} \right\}$ is much lower than ${{\mathcal{F}}_{noise}}$.
Proposition \ref{propo_target_fft} is thus proved.

\end{appendices}

\bibliographystyle{IEEEtran}
\bibliography{Ancho}

\begin{thebibliography}{10}
\providecommand{\url}[1]{#1}
\csname url@samestyle\endcsname
\providecommand{\newblock}{\relax}
\providecommand{\bibinfo}[2]{#2}
\providecommand{\BIBentrySTDinterwordspacing}{\spaceskip=0pt\relax}
\providecommand{\BIBentryALTinterwordstretchfactor}{4}
\providecommand{\BIBentryALTinterwordspacing}{\spaceskip=\fontdimen2\font plus
\BIBentryALTinterwordstretchfactor\fontdimen3\font minus
  \fontdimen4\font\relax}
\providecommand{\BIBforeignlanguage}[2]{{%
\expandafter\ifx\csname l@#1\endcsname\relax
\typeout{** WARNING: IEEEtran.bst: No hyphenation pattern has been}%
\typeout{** loaded for the language `#1'. Using the pattern for}%
\typeout{** the default language instead.}%
\else
\language=\csname l@#1\endcsname
\fi
#2}}
\providecommand{\BIBdecl}{\relax}
\BIBdecl

\bibitem{9585321}
J.~A. Zhang, M.~L. Rahman, K.~Wu, X.~Huang, Y.~J. Guo, S.~Chen, and J.~Yuan,
  ``Enabling joint communication and radar sensing in mobile networks a
  survey,'' \emph{IEEE Commun. Surveys Tuts.}, vol.~24, no.~1, pp. 306--345,
  2022.

\bibitem{1011453411834}
\BIBentryALTinterwordspacing
S.~Li, Z.~Liu, Y.~Zhang, Q.~Lv, X.~Niu, L.~Wang, and D.~Zhang, ``Wiborder:
  Precise {Wi-Fi} based boundary sensing via through-wall discrimination,''
  \emph{Proc. ACM Interact. Mob. Wearable Ubiquitous Technol.}, vol.~4, no.~3,
  sep 2020. [Online]. Available: \url{https://doi.org/10.1145/3411834}
\BIBentrySTDinterwordspacing

\bibitem{9737357}
F.~Liu, Y.~Cui, C.~Masouros, J.~Xu, T.~X. Han, Y.~C. Eldar, and S.~Buzzi,
  ``Integrated sensing and communications: Toward dual-functional wireless
  networks for {6G} and beyond,'' \emph{IEEE J. Sel. Areas Commun.}, vol.~40,
  no.~6, pp. 1728--1767, 2022.

\bibitem{1011453351279}
\BIBentryALTinterwordspacing
Y.~Zeng, D.~Wu, J.~Xiong, E.~Yi, R.~Gao, and D.~Zhang, ``Farsense: Pushing the
  range limit of {WiFi}-based respiration sensing with {CSI} ratio of two
  antennas,'' \emph{Proc. ACM Interact. Mob. Wearable Ubiquitous Technol.},
  vol.~3, no.~3, sep 2019. [Online]. Available:
  \url{https://doi.org/10.1145/3351279}
\BIBentrySTDinterwordspacing

\bibitem{9606921}
C.~Li, N.~Raymondi, B.~Xia, and A.~Sabharwal, ``Outer bounds for a joint
  communicating radar (comm-radar): The uplink case,'' \emph{IEEE Trans.
  Commun.}, vol.~70, no.~2, pp. 1197--1213, 2022.

\bibitem{5393298}
C.~R. Berger, B.~Demissie, J.~Heckenbach, P.~Willett, and S.~Zhou, ``Signal
  processing for passive radar using {OFDM} waveforms,'' \emph{IEEE J. Sel.
  Topics Signal Process.}, vol.~4, no.~1, pp. 226--238, 2010.

\bibitem{8376990}
Z.~Abu-Shaban, X.~Zhou, T.~Abhayapala, G.~Seco-Granados, and H.~Wymeersch,
  ``Performance of location and orientation estimation in {5G} mmwave systems:
  Uplink vs downlink,'' in \emph{Proc. IEEE Wireless Commun. Netw. Conf.},
  2018, pp. 1--6.

\bibitem{9848428}
J.~A. Zhang, K.~Wu, X.~Huang, Y.~J. Guo, D.~Zhang, and R.~W. Heath,
  ``Integration of radar sensing into communications with asynchronous
  transceivers,'' \emph{IEEE Commun. Mag.}, pp. 1--7, 2022.

\bibitem{8443427}
W.~Yuan, N.~Wu, B.~Etzlinger, Y.~Li, C.~Yan, and L.~Hanzo,
  ``Expectation-maximization-based passive localization relying on asynchronous
  receivers: Centralized versus distributed implementations,'' \emph{IEEE
  Trans. Commun.}, vol.~67, no.~1, pp. 668--681, 2019.

\bibitem{9617146}
G.~Wen, H.~Song, F.~Gao, Y.~Liang, and L.~Zhu, ``Target localization in
  asynchronous distributed {MIMO} radar systems with a cooperative target,''
  \emph{IEEE Trans. Wireless Commun.}, vol.~21, no.~6, pp. 4098--4113, 2022.

\bibitem{2018Widar}
\BIBentryALTinterwordspacing
K.~Qian, C.~Wu, Y.~Zhang, G.~Zhang, and Y.~Liu, ``Widar2.0: Passive human
  tracking with a single {Wi-Fi} link,'' \emph{Proc. 16th Annu. Int. Conf.
  Mobile Syst., Appl. Serv.}, pp. 350--361, 2018. [Online]. Available:
  \url{https://doi.org/10.1145/3210240.3210314}
\BIBentrySTDinterwordspacing

\bibitem{9349171}
Z.~Ni, J.~A. Zhang, X.~Huang, K.~Yang, and J.~Yuan, ``Uplink sensing in
  perceptive mobile networks with asynchronous transceivers,'' \emph{IEEE
  Trans. on Signal Process.}, vol.~69, pp. 1287--1300, 2021.

\bibitem{Added_123777}
Y.~Zeng, D.~Wu, J.~Xiong, E.~Yi, R.~Gao, and D.~Zhang, ``Farsense: Pushing the
  range limit of {WiFi}-based respiration sensing with {CSI} ratio of two
  antennas,'' in \emph{Proc. ACM Interact. Mob. Wearable Ubiquitous Technol.},
  vol.~3, no.~3, 2019.

\bibitem{9645160}
D.~Wu, Y.~Zeng, R.~Gao, S.~Li, Y.~Li, R.~C. Shah, H.~Lu, and D.~Zhang,
  ``{WiTraj}: Robust indoor motion tracking with {WiFi} signals,'' \emph{IEEE
  Trans. Mobile Comput.}, vol.~22, no.~5, pp. 3062--3078, 2023.

\bibitem{10262009}
X.~Chen, Z.~Feng, J.~A. Zhang, X.~Yuan, and P.~Zhang, ``Kalman filter-based
  sensing in communication systems with clock asynchronism,'' \emph{IEEE Trans.
  Commun.}, vol.~72, no.~1, pp. 403--417, 2024.

\bibitem{8515231}
R.~Mendrzik, H.~Wymeersch, G.~Bauch, and Z.~Abu-Shaban, ``Harnessing {NLOS}
  components for position and orientation estimation in {5G} millimeter wave
  {MIMO},'' \emph{IEEE Trans. Wireless Commun.}, vol.~18, no.~1, pp. 93--107,
  2019.

\bibitem{8839839}
C.~E. O¡¯Lone, H.~S. Dhillon, and R.~M. Buehrer, ``Single-anchor localizability
  in {5G} millimeter wave networks,'' \emph{IEEE Wireless Commun. Lett.},
  vol.~9, no.~1, pp. 65--69, 2020.

\bibitem{1433260}
K.-F. Ssu, C.-H. Ou, and H.~Jiau, ``Localization with mobile anchor points in
  wireless sensor networks,'' \emph{IEEE Trans. Veh. Technol.}, vol.~54, no.~3,
  pp. 1187--1197, 2005.

\bibitem{9301354}
M.~Kolakowski, V.~Djaja-Josko, and J.~Kolakowski, ``Static {LiDAR} assisted
  {UWB} anchor nodes localization,'' \emph{IEEE Sensors J.}, vol.~22, no.~6,
  pp. 5344--5351, 2022.

\bibitem{5GNR211}
3GPP TS 38.211, Physical channels and modulation, V15.2.0, January 2023.

\bibitem{8928165}
X.~Lin, J.~Li, R.~Baldemair, J.-F.~T. Cheng, S.~Parkvall, D.~C. Larsson,
  H.~Koorapaty, M.~Frenne, S.~Falahati, A.~Grovlen, and K.~Werner, ``{5G New
  Radio}: Unveiling the essentials of the next generation wireless access
  technology,'' \emph{IEEE Commun. Stand. Mag.}, vol.~3, no.~3, pp. 30--37,
  2019.

\bibitem{9779639}
H.~Yin, N.~Li, J.~Guo, J.~Zhu, and X.~She, ``{NR} coverage enhancements for
  {PUSCH},'' \emph{IEEE Commun. Mag.}, vol.~60, no.~7, pp. 36--42, 2022.

\bibitem{1055282}
D.~Rife and R.~Boorstyn, ``Single tone parameter estimation from discrete-time
  observations,'' \emph{IEEE Trans. Inf. Theory}, vol.~20, no.~5, pp. 591--598,
  1974.

\bibitem{ArraySignalProcessing}
H.~L.~V. Trees, \emph{Optimum Array Processing. Detection, Estimation, and
  Modulation Theory, Part {IV}}, John Wiley \& Sons, Inc., New York, USA, 2002.

\bibitem{9393557}
H.~Sun, L.~G. Chia, and S.~G. Razul, ``Through-wall human sensing with {WiFi}
  passive radar,'' \emph{IEEE Trans. Aerosp. Electron. Syst.}, vol.~57, no.~4,
  pp. 2135--2148, 2021.

\bibitem{7833233}
L.~Zheng and X.~Wang, ``Super-resolution delay-{Doppler} estimation for {OFDM}
  passive radar,'' \emph{IEEE Trans. Signal Process.}, vol.~65, no.~9, pp.
  2197--2210, 2017.

\bibitem{RadarHandbook}
M.~I. Skolnik, \emph{Radar Handbook, Third Edition}.\hskip 1em plus 0.5em minus
  0.4em\relax New York, USA: The McGraw-Hill Companies, 2008.

\end{thebibliography}

\vfill

\end{document}